\documentclass[pdflatex,sn-mathphys,Numbered]{sn-jnl}

\usepackage{amsmath,amssymb,amsfonts}%
\usepackage{amsthm}%
\usepackage{xcolor}%
\usepackage{textcomp}%
\usepackage{manyfoot}%
\usepackage{algorithm}%
\usepackage{algorithmicx}%
\usepackage{algpseudocode}%

\theoremstyle{thmstyleone}%
\newtheorem{theorem}{Theorem}% 

\theoremstyle{thmstyletwo}%
\newtheorem{remark}{Remark}%

\theoremstyle{thmstylethree}%
\newtheorem{definition}{Definition}%

\raggedbottom

\newcommand{\R}{\mathbb{R}}

\newcommand{\Z}{\mathbb{Z}}
\newcommand{\N}{\mathbb{N}}

\newcommand{\D}{\mathbb{D}} 
\renewcommand{\P}{\mathbb{P}}
\newcommand{\E}{\mathbb{E}}

\newcommand{\ind}{\text{\usefont{U}{bbold}{m}{n}1}}

\newcommand{\Fc}{\mathcal{F}}

%%%%%%

\begin{document}

\title[Perfect simulation of load balancing networks]{Perfect simulation of Markovian load balancing queueing networks in equilibrium}
\author*{\fnm{Carl}\sur{Graham}\footnote{e-mail: \href{mailto:carl.graham@polytechnique.edu}{carl.graham@polytechnique.edu} --- July 3, 2024}} 
%\email{carl.graham@polytechnique.edu}
\affil{CMAP, CNRS, \'Ecole polytechnique, Institut Polytechnique de Paris, 91120 Palaiseau, France, and Inria}

\abstract{We define a wide class of Markovian load balancing networks of identical single-server infinite-buffer queues. These networks may implement classic parallel server or work stealing load balancing policies, and may be asymmetric, for instance due to topological constraints.
The invariant laws are usually not known even up to normalizing constant. 
We provide three perfect simulation algorithms enabling Monte Carlo estimation of quantities of interest in equilibrium. 
The state space is infinite, and the algorithms use a dominating process provided by the network with uniform routing, in a coupling preserving a preorder which is related to the increasing convex order.
It constitutes an order up to permutation of the coordinates, strictly weaker than the product order. The use of a preorder is novel in this context. 
The first algorithm is in direct time and uses Palm theory and acceptance-rejection. Its duration is finite, a.s., but has infinite expectation.
The two other algorithms use dominated coupling from the past; one achieves coalescence by simulating the dominating process into the past until it reaches the empty state, the other, valid for exchangeable policies, is a back-off sandwiching method.  Their durations have some exponential moments.}

\keywords{load balancing, network topology, parallel queues, work stealing, 
stochastic order, perfect simulation, dominated coupling from the past, 
Monte Carlo estimation, performance evaluation}

\pacs[MSC 2020]{Primary 60K30; Secondary 60E15 60G10 68M20}

\maketitle

\section{Introduction}\label{sec-intro}

This paper provides perfect simulation algorithms for the invariant laws (or stationary distributions) of Markovian queueing networks managed by load balancing policies (or protocols, algorithms, etc.). 
It devises a general abstract definition of these networks, whereas most previous works consider specific examples only. 

These invariant laws are usually not known even up to normalizing constants, which precludes the use of Gibbs samplers.
The algorithms allow to draw samples for practical purposes such as the Monte Carlo estimation of quantities of interest in equilibrium, \emph{e.g.}, quality of service (QoS) indicators for performance evaluation.
They are ``perfect'' up to  real-life constraints on computer precision, running times, pseudo-random number generators, etc.

Each network consists of $c\ge2$ identical single-server infinite-buffer queues, and its state is given by the queue length vector.
Tasks arrive according to a Poisson process of intensity $\lambda>0$.
Each task has an exponential service duration of parameter $1$ (w.l.o.g., up to time-scale) after which it exits the system. 
Queueing discipline is work conserving, thus the total service rate is equal to the number of non-empty queues (busy servers). 
The initial state, arrival stream, and services are independent.
The stability condition $\lambda <c$ will always be assumed to hold.

Load balancing policies observe the network state and strive to allocate tasks to server queues so as to minimize server idleness. 
They may have to satisfy varied constraints, such as partial observation, distributed functioning, and limited overhead. Parallel queue networks allocate each task at arrival to a server queue and leave it there until served~\cite{Gupta-Walton2019,vanderBoor2022}, but other kinds of networks may implement work stealing or jockeying at instants of service completion~\cite{Elsayed-Bastani1985,Zhao-Grassmann1995,Mitzenmacher2001,Tchiboukdjian-etal2010,Suksompong-etal2016,Kielanski-VanHoudt2021}. 

We shall determine a wide class of load balancing networks allowing quite general policies. 
These policies are possibly asymmetric in terms of the queues, for instance they may respect a directed graph network topology, and may thus yield non-exchangeable queue length vector processes.
Most previous results are proved for specific exchangeable networks in suitable asymptotic regimes. 
Many interesting networks are asymmetric, non-asymptotic results are often needed, and simulation studies such as those in~\cite{Turner1998} are then quite useful.
See~\cite[Sect.~6]{vanderBoor2022}, \cite{Budhiraja2019}, \emph{e.g.}

We shall provide three perfect simulation algorithms.  
The first one is a direct-time acceptance-rejection method based on Palm theory. Its run time is finite but has infinite mean, and it cannot be implemented in practice.
The other two use backward couplings and have run times with some exponential moments. 
Backward couplings have long been used for invariant laws~\cite{Loynes1962}. 
Coupling From The Past (CFTP) was devised by Propp and Wilson~\cite{Propp1996} for perfect simulation of the invariant law of a Markov chain on a finite state space with a largest and a least element, and
has been extended for infinite state spaces into Dominated CFTP (DomCFTP) using suitable dominating processes simulated backward and forward in time, see~\cite{Connor-Kendall2015,Kendall1998,Kendall-Moller2000,Thonnes2000,Kendall2005,Huber2016}.

We were inspired by the use in Connor and Kendall~\cite{Connor-Kendall2015} of DomCFTP 
for a $M/G/c$ [FCFS] queue, and we refer to this paper for interesting background on the topic.
The service distribution has a second moment, thus the emptying time from equilibrium has a first moment. 
The arrival rate is $\lambda$, the mean service time is $1/\mu$, the offered traffic is $\lambda/\mu$, 
and the queue is stable if and only if $\lambda/\mu<c$. 
The Kiefer-Wolfowitz workload vectors record the evolution of the $M/G/c$ [FCFS] queue by a Markov chain. 
The dominating process is given by a $M/G/c$ queue with uniform routing,
called random assignment and denoted by $M/G/c$ [RA], which can be viewed as a system of $c$ i.i.d.\ $M/G/1$ queues
with arrival rates $\lambda/c$ denoted by $[M/G/1]^c$. 
The invariant law of the $M/G/1$ queue can be perfectly simulated using the Pollaczek-Khintchine formula.
The workload of the $M/G/1$ queue does not depend on the queueing discipline, and the dynamic reversibility of the 
$M/G/1$ [PS] queue is used to simulate the dominating process backward in time. 
Much care is needed to ensure that the coupling between $M/G/c$ [FCFS] and $[M/G/1]^c$ is preserved 
when simulating forward in time.

The present paper is organized as follows.
Section~\ref{sec-defs} provides examples of load balancing networks and then provides a general definition. 
Section~\ref{sec-coupl-preord} constructs a coupling in which the uniform routing (UR) network has a task backlog not less than an arbitrary load balancing network, in a preorder related to the increasing convex order which constitutes an order up to permutation of the coordinates strictly weaker than the product (coordinate-wise) order. 
To the best of our knowledge, the use of such a preorder is novel for perfect simulation.
Section~\ref{sec-sim-sym} explains how to manage efficiently in code the coupled processes and defines an embedded Markov chain which has same invariant law and is simpler to simulate.
Section~\ref{sec-Palm} exploits the domination by the UR network for an acceptance-rejection method based on Palm theory, inspired by~\cite{Sigman2012}.
Section~\ref{sec-DomCFTP} provides the two DomCFTP methods inspired by~\cite{Connor-Kendall2015}. 
The dominating UR network is reversible in equilibrium and thus simulatable backward  in time, and the preservation of the coupling through the backward and forward simulations is obtained by considering the law of an augmented Markov chain.
The first method achieves coalescence by simulating the dominating process into the past until it reaches the empty state. 
The second is valid only for exchangeable networks and uses a back-off sandwiching method which is likely to be much quicker.

\section{Load balancing networks}
\label{sec-defs}
\subsection{Examples of parallel server load balancing policies}
\label{sec-examples} 

Parallel server policies allocate each task at arrival to a server queue and leave it there until it ends its service
and departs the network. 

Uniform Routing (UR).
At each arrival, the policy chooses a queue uniformly at random and allocates the task to it. 
This blind policy is perhaps the only one for which the invariant law is known:
the $c$ queues receive independent Poisson arrivals of intensities $\lambda/c$ and constitute a system of 
i.i.d.\ $M/M/1$ queues. Under the ergodicity condition $\lambda<c$, the invariant law is $\gamma^{\otimes c}$ where  $\gamma$ is the geometric law given by $\gamma(n) \triangleq   (1-(\lambda/c) ) (\lambda/c) ^n$ for $n\in\N_0$.
This allows performance evaluation in equilibrium, \emph{e.g.}, the probability that a given queue contains at least $n$ tasks and the limit proportion of such queues when $c$ and  $\lambda$ go to infinity with $\lambda/c$ fixed is given by $\gamma([n,\infty))=(\lambda/c)^n$.

Join the Shortest Queue (JSQ).
At each arrival, the task is routed to one of the queues of minimal length in the network.
Here and below, ties may be broken arbitrarily, \emph{e.g.}, uniformly at random.
JSQ requires high overhead in form of intensive signaling and bookkeeping, 
and is practicable only in certain applications such as call centers or parallel computing of moderate sizes.
JSQ is optimal among parallel queue policies using only the past queue length information,
see~\cite{Winston1977} and the coupling proof in~\cite{Liu1995} inspired by previous work of S.~Foss.
The invariant law is unknown, even though complex analysis provides useful transforms
of these laws for $c=2$, see~\cite{Fayolle-Iasnogorodski1979,Kurkova-Suhov2003}.

Power of $d$ Choices (JSQ($d$)). 
At each arrival, $d\ge2$ queues are chosen uniformly at random, with or without replacement according to the variant 
chosen, and the task is routed to one of these of minimal length. 
The degenerate case $d=1$ corresponds to UR and the case $d=c$ without replacement to JSQ. 
JSQ($d$) can be seen as a low overhead proxy for JSQ, and has long been intensively studied.
 
Idle Queue First (IQF).
At each arrival, the task is routed to an empty queue (\emph{i.e.}, idle server)  in the network if any are present, else to a
uniformly chosen queue. This can be seen as a low overhead proxy for JSQ in which queues only need to signal their 
changes of status between empty and non-empty. This policy was investigated in~\cite{Lu2011} 
under the name Join Idle Queue.

Idle One First (IOF).
At each arrival,  the task is routed to an empty queue in the network if any, else to a queue of length one if any, 
else to a uniformly chosen queue. This can be seen as a low overhead proxy for JSQ in which queues only 
need to signal their changes of status between empty, containing one task, and other. 
This policy was introduced and investigated in~\cite{Gupta-Walton2019} under the acronym I1F, see below.  
  
\subsection{Asymptotic performance evaluation}
\label{sec-perf-eval}

Asymptotic performance evaluation yields guidelines and understanding. 
It is usually applied to an exchangeable network in which a reduced low-dimensional, ideally one-dimensional, representation exists yielding a tractable limit process.

The $M/M/c$ queue has an optimal server utilization.
It is called a ``centralized system'' in~\cite{Gupta-Walton2019,vanderBoor2022} and cannot be matched by parallel server queues, but can be by a network implementing IQF at arrivals and work stealing in which any server finishing service of the last task in its queue simultaneously steals unattended work from another queue if any. 
It is interpreted as a ``Join the Shortest Workload'' policy by the Kiefer-Wolfowitz construction \cite[Sect.~2]{Connor-Kendall2015}.
Its invariant law is explicit up to a normalizing constant which is known for all practical purposes from the well-tabulated Erlang $C$ formula. This allows for precise performance evaluation and provides a benchmark.

JSQ($d$) has been investigated in the mean-field regime in which $c$ and $\lambda$ go to infinity with $\lambda/c$ and $d\ge2$ kept fixed. The equilibrium proportion of queues containing at least $n$ tasks converges to $((\lambda/c)^{d^n}-1)/(d-1)$ with hyper-exponential decay in $n$, much faster than the exponential decay $(\lambda/c)^n$ for UR.
See~\cite{Vvedenskaya1996, Mitzenmacher1996,Graham2000,Graham2005}, \emph{e.g.}
In this regime, the asymptotic performance of IQF is perfect since the limit proportion of empty queues is non-zero and thus every task starts service at arrival.

Gupta and Walton~\cite{Gupta-Walton2019} investigate parallel server policies in a much more heavily loaded regime, the nondegenerate slowdown regime~\cite{Atar2012} in which $\lambda$ and $c$ go to infinity with $c - \lambda$ going to $\alpha>0$. They prove that the performances of JSQ and IOF are comparable and are within 15\% of the performance of $M/M/c$, while the performance of IQF is within 100\% of the performance of $M/M/c$.
In this regime, IOF is asymptotically optimal and better than IQF, and it is pointless to implement 
more subtle policies, for instance with higher thresholds.

The wide survey~\cite{vanderBoor2022} studies parallel server networks in the fluid regime as well as in the Halfin-Whitt regime in which $\lambda$ and $c$ go to infinity with $\frac{1}{\sqrt{c}}(c - \lambda)$ going to $\beta>0$. 
In these regimes, IQF is already asymptotically optimal.

\subsection{General definitions for load balancing networks}

A filtered probability space $(\Omega, \Fc, (\Fc_t)_{t\in\R_+},\P)$ satisfying the usual assumptions is given.
It supports all random elements, all random processes are adapted, and
all Poisson process are $(\Fc_t)_{t\in\R_+}$-Poisson processes.
Random elements are assumed to be independent from each other if not stated otherwise.
Let $\N_0 \triangleq \{0,1,\dots\}$ and $(e_i)_{1\le i \le c}$ denote the canonical basis of 
$\R^c$ and $\mathfrak{S}_c$ the permutation group of $\{1,\dots,c\}$.

We devise a workable general definition of load balancing networks, instead of focusing on specific examples.
It allows any classic load balancing policy. 

\begin{definition}[Load balancing network]
\label{def-class}
Tasks arrive according to a $(\Fc_t)_{t\in\R_+}$-Poisson process of intensity $\lambda>0$ on $c\ge2$ single-server 
infinite-buffer queues. Service duration is exponential with parameter $1$.
The state of the network is described by the queue length (vector) process with sample paths in $\D(\R_+,\N_0^c)$ given by 
\[
X \triangleq (X(t))_{t\in\R_+}\,,
\quad
X(t) \triangleq (X^i(t))_{1 \le i \le c}\,,
\]
where $X^i(t)$ is the length of (number of tasks in) queue $i$ at time $t$. 
The load balancing networks under consideration respect the following rules. 
\begin{enumerate}
\item
The process $(X(t))_{t\in\R_+}$ is $(\Fc_t)_{t\in\R_+}$-Markov.
\item\label{task-arr}
At each arrival, the task is routed to a queue not longer than a uniformly chosen queue.
Specifically,  if $t$ is a task arrival instant then the policy chooses $j$ uniformly in $\{1,\dots,c\}$
and determines a queue of index~$i$ such that 
\[
X^i(t-) \le X^j(t-)\,, \qquad X(t) = X(t-) +e_i\,.
\]
\item\label{serv-compl}
The queueing discipline in each queue is work conserving. Thus, the total service intensity is equal to  
the number of non-empty queues and the subsequent service completion concerns a uniformly 
chosen queue among these. 
\item\label{work-steal}
At each instant of service completion, one task may be reallocated from a queue 
which just before had a length not less than the queue completing service to the latter.
Specifically, if $t$ is a service completion instant and queue~$j$ is completing service
then the policy determines a queue of index~$i$ such that 
\[
X^i(t-) \ge X^j(t-)\ge 1\,, \qquad X(t) = X(t-) -e_i\,.
\]
\end{enumerate}

The policy and the network are said to be \emph{exchangeable} if the decision taking is invariant in law under permutations 
of the queue indices. In this case, if $X(0)$ is exchangeable then $X(t)$ is exchangeable, and it is so in equilibrium. 

The uniform routing (UR) network, for which $i=j$ in Rules~\ref{task-arr} and~\ref{work-steal}, is an exchangeable load balancing network with queue length process denoted by 
\[
Y \triangleq (Y(t))_{t\in\R_+}\,,
\quad
Y(t) \triangleq (Y^i(t))_{1 \le i \le c}\,.
\]
\end{definition}

Poisson process splitting implies that $Y \triangleq (Y(t))_{t\in\R_+}$ constitutes a system of $c$ i.i.d.\ $M/M/1$ 
queue processes with arrival rates $\lambda/c$ and service rates~$1$. 
It is positive recurrent if and only if $\lambda <c$ and then it is reversible in equilibrium
and has simulatable invariant law $\gamma^{\otimes c}$, the $c$-fold product of the geometric law
\begin{equation}
\label{geom-law}
\gamma \triangleq (\gamma(n))_{n\in\N_0}\,,
\quad\gamma(n) \triangleq   (1-(\lambda/c) ) (\lambda/c) ^n\,.
\end{equation}

Rule~\ref{task-arr} is satisfied by all policies in Section~\ref{sec-examples}.
Rule~\ref{work-steal} is trivially satisfied by any parallel server policy, but allows work stealing or jockeying in which the queue of a server completing a service may be immediately joined by a task from a longer queue, such  instantaneous jumps not being recorded explicitly by the queue length process. 
The implementation of $M/M/c$ described in Section~\ref{sec-perf-eval} satisfies Definition~\ref{def-class}.

\begin{remark}
\label{rmk-exch-graph}
The policies in Section~\ref{sec-examples} are exchangeable if ties are broken uniformly at random
within queues. 
Definition~\ref{def-class} allows policies which break symmetry between queues and are not exchangeable.
For instance, the network may have a directed graph topology and the policies
in Section~\ref{sec-examples} may be implemented on the neighborhood of the uniformly chosen queue.
See~\cite{Turner1998}, \cite[Sect.~6]{vanderBoor2022}, \cite{Budhiraja2019}, \emph{e.g.}
\end{remark}

Most if not all asymptotic results are limited to policies which are exchangeable, or at least have an infinitely growing
number of possible choices becoming exchangeable in the limit as in~\cite[Sect.~6]{vanderBoor2022}
and \cite{Budhiraja2019}. Non-asymptotic performance evaluation may also be required.
Simulation studies such as~\cite{Turner1998} usually rely on waiting long lengths of time and 
hoping that equilibrium has been reached without theoretical guidelines.

The present paper will provide theoretical results allowing perfect simulation
for general policies. The special case of exchangeable networks will allow more efficient codes and will be 
investigated specifically.

\section{Preorder and coupling}\label{sec-coupl-preord}

\subsection{Preorder and order up to permutation of coordinates}\label{sec-sto-order}

Let $x\triangleq (x^i)_{1\le i \le c}$ in $\N_0^c$ be a network state and $n$ be in $\N_0$. Let
\begin{equation}
\label{alpha-beta}
\alpha_n(x) \triangleq \sum_{i=1}^c \ind_{\{x^i > n\}}\,,
\qquad
\beta_n(x) \triangleq \sum_{i=1}^c (x^i - n)^+ = \sum_{k\ge n}\alpha_k(x) \,,
\end{equation}
denote respectively the number of queues containing more than $n$ tasks and the number of these excess tasks.
The $\beta_n$ are natural quality of service (QoS) indicators, 
in particular $\beta_0$ counts the total number of customers and for the FCFS discipline
$\beta_n$ counts the number of tasks which have at least $n$ tasks in front of them.

\begin{theorem}
\label{thm-order}
A preorder $\preccurlyeq$ between network states $x$ and $y$ in $\N_0^c$ is defined by
\[
x \preccurlyeq y\iff \beta_n(x) \le \beta_n(y)\,,\; \forall n\in\N_0\,.
\]
The indifference relation $\sim$ defined by $x \sim y$ if and only if $x \preccurlyeq y$ and $y \preccurlyeq x$
satisfies
\[
x \sim y \iff  \exists \sigma\in\mathfrak{S}_c : (x^{\sigma(1)},\dots, x^{\sigma(c)}) = (y^1,\dots,y^c)
\]
and thus $\preccurlyeq$ defines a partial order up to permutation of the coordinates.
Moreover
\[
x \preccurlyeq y\iff \sum_{i=1}^c f(x^i) \le \sum_{i=1}^c f(y^i) \text{ for any non-decreasing convex } f:\R_+ \to \R\,.
\]
\end{theorem}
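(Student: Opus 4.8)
The plan is to treat the three claims separately, in order of increasing difficulty, with essentially all of the work in the last one.

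That $\preccurlyeq$ is a preorder is immediate: it is by definition the conjunction over $n\in\N_0$ of the inequalities $\beta_n(x)\le\beta_n(y)$ between real numbers, so reflexivity and transitivity are inherited from $\le$ on $\R$; no antisymmetry is asserted, as befits a preorder. For the characterisation of the indifference relation, the key observation is the telescoping identity $\alpha_n(x)=\beta_n(x)-\beta_{n+1}(x)$, which follows at once from the relation $\beta_n(x)=\sum_{k\ge n}\alpha_k(x)$ in \eqref{alpha-beta}. Hence $x\sim y$, i.e. $\beta_n(x)=\beta_n(y)$ for all $n$, is equivalent to $\alpha_n(x)=\alpha_n(y)$ for all $n\in\N_0$. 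Since $\alpha_n(x)=\#\{i:x^i>n\}$, the eventually-zero sequence $(\alpha_n(x))_{n\ge0}$ records the histogram of the multiset $\{x^1,\dots,x^c\}$ via $\#\{i:x^i=0\}=c-\alpha_0(x)$ and $\#\{i:x^i=m\}=\alpha_{m-1}(x)-\alpha_m(x)$ for $m\ge1$. So $x\sim y$ iff $x$ and $y$ have the same multiset of coordinates iff some $\sigma\in\mathfrak{S}_c$ carries one to the other; the reverse implication is trivial because each $\beta_n$ is a symmetric function of the coordinates. Quotienting by $\sim$ then makes $\preccurlyeq$ antisymmetric, that is, a genuine partial order, which is the meaning of ``order up to permutation of the coordinates''.

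For the convex-order equivalence, the implication ``$\Leftarrow$'' is obtained by applying the hypothesis to the non-decreasing convex function $f(t)=(t-n)^+$, which gives exactly $\beta_n(x)\le\beta_n(y)$. For ``$\Rightarrow$'' I would represent any non-decreasing convex $f$, restricted to $\N_0$, as a non-negative combination of the hinge functions $t\mapsto(t-n)^+$ plus a constant. Setting $d_m:=f(m+1)-f(m)$, which is $\ge0$ by monotonicity and non-decreasing in $m$ by convexity, and $\delta_0:=d_0$, $\delta_n:=d_n-d_{n-1}\ge0$ for $n\ge1$, one has $d_k=\sum_{n=0}^k\delta_n$, and exchanging the order of summation in $f(m)=f(0)+\sum_{k=0}^{m-1}d_k$, using $\sum_{k=0}^{m-1}\ind_{\{k\ge n\}}=(m-n)^+$, yields $f(m)=f(0)+\sum_{n\ge0}\delta_n(m-n)^+$ for every $m\in\N_0$, a finite sum since $(m-n)^+=0$ once $n\ge m$. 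Summing over the $c$ coordinates gives $\sum_{i=1}^c f(x^i)=c\,f(0)+\sum_{n\ge0}\delta_n\beta_n(x)$, and likewise for $y$; as $\delta_n\ge0$ and $\beta_n(x)\le\beta_n(y)$ for every $n$, subtracting the two identities finishes the proof.

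I expect the only real obstacle to be this hinge-function decomposition of $f$ — the discrete analogue of writing a convex function as a superposition of its ``tent'' functions weighted by the second-difference (curvature) measure — together with the bookkeeping that keeps the sums finite (they are, since $\beta_n(x)=\beta_n(y)=0$ for $n\ge\max_i\max(x^i,y^i)$) and the coefficients $\delta_n$ non-negative; the latter is precisely where both hypotheses on $f$ are used, monotonicity for $d_m\ge0$ and convexity for $\delta_n\ge0$. Everything else is formal. I would also note that only the restriction of $f$ to $\N_0$, with discrete monotonicity and convexity there, is ever needed, since all coordinates lie in $\N_0$.
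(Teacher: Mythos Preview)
Your proof is correct. The treatment of the preorder and of the indifference relation is essentially the paper's: the paper argues top-down from the maximal nonzero level of $\beta$, which amounts to recovering the $\alpha_n$ one at a time, whereas you make the telescoping $\alpha_n=\beta_n-\beta_{n+1}$ explicit and read off the full histogram at once; these are the same idea.

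The genuine difference is in the convex-order direction ``$\Rightarrow$''. The paper works on $\R_+$: it normalises $f(0)=0$, invokes the existence of the right derivative $f'_+$ of a convex function, and integrates by parts to obtain $f(x)=x f'_+(0)+\int_{(0,\infty)}(x-y)^+\,\mathrm{d}f'_+(y)$, exhibiting $f$ as a positive-cone combination of the continuous hinge functions $(x-y)^+$. You instead restrict to $\N_0$ and use the purely discrete decomposition via second differences $\delta_n$, obtaining $f(m)=f(0)+\sum_{n\ge0}\delta_n(m-n)^+$ with $\delta_n\ge0$. Your route is more elementary (no appeal to differentiability of convex functions or Stieltjes integration) and entirely sufficient here since the coordinates lie in $\N_0^c$; the paper's route has the advantage of applying verbatim to states in $\R_+^c$ and of connecting directly to the standard increasing convex order on $\R$. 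Your remark that only discrete monotonicity and convexity of $f|_{\N_0}$ are needed is a small sharpening of the statement as written.
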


\begin{proof}
Clearly $\preccurlyeq$ is reflexive and transitive and is thus a preorder, and
$x \sim y$ if and only if $\beta_n(x)  =\beta_n(y) $ for all $n$ in $\N_0$. In particular 
\[
\max\{n\in\N_0 : \beta_n(x) >0\} = \max\{n\in\N_0 : \beta_n(y) >0\} \triangleq m\,,
\quad
\beta_m(x) = \beta_m(y)\,,
\]
which imply that $x$ and $y$ have same number of coordinates with the same maximal size $m+1$.
Inductively by considering $\beta_{m -1}$, \dots\,, $\beta_0$ 
we deduce that $x$ and $y$ have same number of coordinates of 
every size greater than or equal to $1$, and hence are equal up to a permutation of coordinates.

For the last equivalence statement, sufficiency is a special case. 
To prove  necessity, we may and do assume that $f(0)=0$. Then $f$ is continuous on $\R$ and has a 
right-hand derivative $f'_+$ which is non-negative, non-decreasing, right-continuous, and such that
$f(x)  = \int_0^x f'_+(y)\,\mathrm{d}y$, see~\cite[Thms~24.1--2]{Rockafellar1970}.
Integration by parts yields that
\[
f(x)  = xf'_+(x)  - \int_{(0,x]} y\,\mathrm{d}f'_+(y)
 = x f'_+(0)  + \int_{(0,\infty)} (x-y)^+ \,\mathrm{d}f'_+(y)\,.
\]
Thus $f$ is in the positive cone generated by the 
$x\in\R_+ \mapsto (x-y)^+$ for $y\ge0$.
\end{proof}

This preorder is related to the univariate increasing convex order~\cite{Stoyan1983}, 
\cite[Chap.~4]{Shaked-Shantikumar2007}. Its quotient order by $\sim$ is strictly weaker than the product order 
up to permutations of the coordinates, for instance $(1,1)\preccurlyeq(0,2)$.

\subsection{Coupling a load balancing network with the UR network}\label{sec-coupling}

We shall construct a coupling of the queue length processes $X\triangleq (X^1,\dots, X^c)$  of a generic load balancing
network and $Y\triangleq (Y^1,\dots, Y^c)$ of a UR network (Definition~\ref{def-class}) which preserves 
the preorder $\preccurlyeq$. 
A similar coupling is used  in~\cite[Thm~4]{Turner1998} and~\cite[Sect.~4.1]{Graham2000} to compare 
networks implementing JSQ($d$) for different  $d\ge1$.

The coupling will require to rearrange the coordinates of $X$ and of $Y$ in non-decreasing order.
Ties between queues of equal length must be broken using a $\mathfrak{S}_c$-valued process
$(\sigma(t))_{t\in\R_+}$ with nice properties satisfying 
\begin{equation}\label{eq-sigma-tau}
X^{\sigma(t)(1)}(t) \le \dots \le X^{\sigma(t)(c)}(t)\,,
\end{equation}
and similarly for $Y$, with permutation process denoted by $(\tau(t))_{t\in\R_+}$. 

For the sake of simplicity, our exposition will assume that ties are broken by ascending queue index, so that 
$\sigma(t)$ is a deterministic function of $X(t)$. 
Section~\ref{sec-pract-man} will address some practical simulation issues.
It is not efficient to sort $X(t)$ at each update, and $\sigma(t)$ should also be stored and updated. 
The updates of  $\sigma(t)$ when ties are broken by ascending queue index are inefficient, and
we describe there another tie-breaking rule with efficient updates to which our exposition can readily be extended.

The coupling starts by drawing $X(0)$ and $Y(0)$ from a suitable joint law.

The arrivals in both networks follow the same $(\Fc_t)_{t\in\R_+}$-Poisson process of intensity~$\lambda$ 
marked by uniform draws in $\{1,\dots,c\}$. Let $t$ and $k$ be an arrival instant and its mark.
The generic policy takes as its uniformly chosen queue the $k$-th in the non-decreasing ordering, of 
index $\sigma(t-)(k)$, and determines a queue~$i$ such that 
\[
X^i(t-) \le X^{\sigma(t-)(k)}(t-) \,, \quad X(t) = X(t-) + e_i \,.
\]
The UR network does likewise with $i =\tau(t-)(k)$.

The potential service completions in both networks follow the same $(\Fc_t)_{t\in\R_+}$-Poisson process of intensity $c$ 
marked by uniform draws in $\{1,\dots,c\}$.  Let $t$ and $k$ be a potential service completion instant and its mark. 
The generic network has a service completion in the $k$-th queue in the non-decreasing ordering, of index 
$\sigma(t-)(k)$, if and only if it is non-empty.  In this case its policy determines a queue~$i$ such that
\[
X^i(t-)  \ge X^{\sigma(t-)(k)}(t-) \ge1\,, \quad X(t) = X(t-) - e_i \,.
\]      
The UR network does likewise with $i =\tau(t-)(k)$.

\begin{theorem}\label{thm-coupling}
This construction yields a coupling between the queue length process $X\triangleq (X^1,\dots, X^c)$ of a generic 
load balancing network and $Y\triangleq (Y^1,\dots, Y^c)$ of the UR network. Moreover,
\[
X(0) \preccurlyeq Y(0)  \implies X(t) \preccurlyeq Y(t)\,, \; \forall t\in\R_+\,.
\]
\end{theorem}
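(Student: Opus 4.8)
The plan is to induct on the successive jump times of the coupled pair $(X,Y)$, which form a discrete set since both are pure-jump Markov processes driven by the same two Poisson streams. Between jumps nothing changes, so it suffices to show that the preorder $X(t-)\preccurlyeq Y(t-)$ is propagated across each jump. By Theorem~\ref{thm-order} it is equivalent, and more convenient, to track the quantities $\beta_n$: I must show $\beta_n(X(t-))\le\beta_n(Y(t-))$ for all $n$ implies $\beta_n(X(t))\le\beta_n(Y(t))$ for all $n$. There are two kinds of jumps, arrivals (common Poisson process of rate $\lambda$ with a uniform mark $k$) and potential service completions (common Poisson process of rate $c$ with a uniform mark $k$); I treat them in turn, and within each I compare the effect of the mark $k$ on the two networks, which by construction act on the $k$-th queue in the non-decreasing ordering of each.

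First I record how $\beta_n$ changes under the elementary moves. Adding a task to a queue of current length $\ell$ increases $\beta_n$ by exactly $\ind_{\{\ell\ge n\}}$ — equivalently, $\beta_n$ increases by $1$ for $n\le\ell$ and is unchanged for $n>\ell$. Removing a task from a queue of length $\ell\ge1$ decreases $\beta_n$ by $\ind_{\{\ell> n\}}=\ind_{\{\ell\ge n+1\}}$ — so $\beta_n$ drops by $1$ for $n\le\ell-1$ and is unchanged for $n\ge\ell$. Now consider an arrival with mark $k$. In the UR network the task goes to queue $\tau(t-)(k)$, which has length $Y^{(k)}\triangleq Y^{\tau(t-)(k)}(t-)$, the $k$-th smallest coordinate of $Y(t-)$; in the generic network Rule~\ref{task-arr} (as implemented in the coupling) puts the task into some queue $i$ with $X^i(t-)\le X^{(k)}$, where $X^{(k)}$ is the $k$-th smallest coordinate of $X(t-)$. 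Writing $\ell_X=X^i(t-)\le X^{(k)}$ and $\ell_Y=Y^{(k)}$, the increments are $\Delta\beta_n(X)=\ind_{\{\ell_X\ge n\}}$ and $\Delta\beta_n(Y)=\ind_{\{\ell_Y\ge n\}}$. The key order fact — and this is the heart of the matter — is that $X(t-)\preccurlyeq Y(t-)$ forces $X^{(k)}\le Y^{(k)}$ for every $k$: indeed, $X^{(k)}> Y^{(k)}$ would mean at least $c-k+1$ coordinates of $X$ strictly exceed $Y^{(k)}$ while at most $c-k$ of $Y$ do, so with $m\triangleq Y^{(k)}$ one gets $\alpha_m(X)>\alpha_m(Y)$; summing $\alpha_j$ over $j\ge m$ and using that each $\alpha_j(X)\ge$ the corresponding count argument shows $\beta_m(X)>\beta_m(Y)$, contradicting the hypothesis. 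Hence $\ell_X\le X^{(k)}\le Y^{(k)}=\ell_Y$, so $\Delta\beta_n(X)\le\Delta\beta_n(Y)$ for every $n$, and the inequality $\beta_n(X)\le\beta_n(Y)$ is preserved.

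For a potential service completion with mark $k$, the UR network removes a task from queue $\tau(t-)(k)$ provided $Y^{(k)}\ge1$, and the generic network, provided $X^{(k)}\ge1$, removes a task from some queue $i$ with $X^i(t-)\ge X^{(k)}\ge1$ (Rule~\ref{work-steal}/\ref{serv-compl} as coupled). I split into cases according to whether $X^{(k)}$ and $Y^{(k)}$ are zero. If $Y^{(k)}\ge1$ then necessarily $X^{(k)}\le Y^{(k)}$ gives no information about whether $X^{(k)}\ge1$, so: (i) if $X^{(k)}=0$ then $X$ does not move while $Y$ loses a task, and $\beta_n$ can only decrease on the $Y$ side, so the inequality is trivially preserved; (ii) if $X^{(k)}\ge1$, set $\ell_X=X^i(t-)\ge X^{(k)}$ and $\ell_Y=Y^{(k)}$; the decrements are $\Delta\beta_n(X)=-\ind_{\{\ell_X\ge n+1\}}$ and $\Delta\beta_n(Y)=-\ind_{\{\ell_Y\ge n+1\}}$. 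Here I need $\beta_n(X)-\ind_{\{\ell_X> n\}}\le\beta_n(Y)-\ind_{\{\ell_Y>n\}}$; this is automatic when $\ind_{\{\ell_X>n\}}\ge\ind_{\{\ell_Y>n\}}$, i.e. for $n$ with $\ell_Y>n$ (since then $\ell_X\ge X^{(k)}\ge1$ and I must check $\ell_X>n$ — which may fail), so the genuinely delicate sub-case is $n$ with $X^{(k)}\le n<\ell_Y=Y^{(k)}$, where $Y$'s count drops but $X$'s count at level $n$ may not. In that sub-case I use the \emph{strict slack} coming from $X^{(k)}\le n$: the same $\alpha$-counting as above shows $\alpha_j(X)\le\alpha_j(Y)$ cannot all be tight for $j\ge n$ when $X^{(k)}\le n<Y^{(k)}$, in fact $\beta_n(X)\le\beta_n(Y)-(Y^{(k)}-n)\le\beta_n(Y)-1$, which absorbs the unit decrease on the $Y$ side. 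Finally, if $Y^{(k)}=0$ then also $X^{(k)}\le Y^{(k)}=0$, so neither network moves. Assembling the cases, $X(t)\preccurlyeq Y(t)$; induction over jump times completes the proof.

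The main obstacle is exactly the last delicate sub-case of service completions: the naive coordinatewise comparison breaks because the generic network may steal work into a short queue while UR empties a longer one, and one must convert the \emph{qualitative} fact $X^{(k)}\le n<Y^{(k)}$ into the \emph{quantitative} slack $\beta_n(X)\le\beta_n(Y)-1$. I expect this to hinge on a clean lemma: for any $n$ and any $k$, $\beta_n(x)\le\beta_n(y)$ for all $n$ (i.e. $x\preccurlyeq y$) implies $\sum_{j\ge n}\alpha_j(x)\le\sum_{j\ge n}\alpha_j(y)$ with the quantitative refinement that if the $k$-th smallest coordinate of $y$ strictly exceeds $n$ while that of $x$ does not, then the deficit is at least the excess of $y$'s coordinate over $n$. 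This is a short counting argument but it is where all the work is, so I would isolate it as an auxiliary claim before the case analysis.
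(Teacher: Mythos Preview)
Your arrival case contains a genuine error. You assert that $x\preccurlyeq y$ forces $x^{(k)}\le y^{(k)}$ for every $k$, i.e.\ that the order statistics are coordinatewise comparable. This is false, and the paper itself supplies the counterexample: $(1,1)\preccurlyeq(0,2)$ but the first order statistic of $(1,1)$ is $1>0$. Your justification breaks at the step ``summing $\alpha_j$ over $j\ge m$ \dots\ shows $\beta_m(X)>\beta_m(Y)$'': from $\alpha_m(X)>\alpha_m(Y)$ alone you cannot conclude $\beta_m(X)>\beta_m(Y)$, since the $\alpha_j$ for $j>m$ may compensate. Concretely, with $x=(1,1)$, $y=(0,2)$, $k=1$, the UR network adds to a queue of length $0$ while the generic network must add to a queue of length $1$; then $\Delta\beta_1(X)=1>0=\Delta\beta_1(Y)$, so your claimed inequality $\Delta\beta_n(X)\le\Delta\beta_n(Y)$ fails. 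The preorder is nevertheless preserved here, but only because $\beta_1(X(t-))=0<1=\beta_1(Y(t-))$ had a unit of slack to absorb the bad increment --- exactly the mechanism you correctly identified as ``delicate'' in the service case but wrongly dismissed as unnecessary for arrivals.

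The paper's proof avoids this trap by arguing by contradiction in both cases symmetrically: it supposes some $\beta_p$ is violated after the jump, deduces $\beta_p$ was \emph{tight} before (since increments are $\pm1$), and then uses that tightness together with the positions $c-\alpha_{l-1}(X)<k\le c-\alpha_m(Y)$ in the ordered vectors to squeeze out a contradiction on the $\alpha$'s. The point is that the slack argument is needed on \emph{both} sides; your proposal handles it only for services. If you want to repair your direct approach, the arrival case needs the mirror image of your service-case claim: when $\ell_Y<n\le\ell_X$ (so $X$'s count rises but $Y$'s does not), one must show $\beta_n(X(t-))\le\beta_n(Y(t-))-1$, and this requires the same kind of counting with $\alpha$'s that the paper carries out.
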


\begin{proof}
Classic arguments about Poisson process splitting  and about uniform draws and conditioning prove that this is indeed a coupling. 

Assume that $X(0) \preccurlyeq Y(0)$.
Let $t$ be either an arrival or a potential service completion instant.
In a proof by contradiction, assume that
\begin{equation}
\label{eq-proof-contra}
\forall n\in\N_0\,,\; \beta_n(X(t-)) \le \beta_n(Y(t-)) \,,
\quad 
\exists p\in\N_0\,,\; \beta_p(X(t))  > \beta_p(Y(t)) \,.
\end{equation}
This implies that
\begin{equation}
\label{eq-equal-before}
\beta_p(X(t-))  = \beta_p(Y(t-))\,.
\end{equation}

First case: $t$ is an arrival instant with mark $k$.
The generic network allocates the task to queue~$i$ and the UR network to queue~$\tau(t-)(k)$ in a way that  
\begin{equation}
\label{eq-queue lengths-arr}
l \triangleq  X^i(t-) \le X^{\sigma(t-)(k)}(t-)\,,
\qquad
m \triangleq Y^{\tau(t-)(k)}(t-)\,.
\end{equation}
Since $c-\alpha_n$ is the number of coordinates of size at most $n$, the non-decreasing orderings
imply that $c - \alpha_{l-1}(X(t-))   < k$ and  $k \le  c- \alpha_m(Y(t-))$ and thus that
\[
\alpha_m(Y(t-))  < \alpha_{l-1}(X(t-)) \,.
\]
Moreover \eqref{alpha-beta}, \eqref{eq-proof-contra}, \eqref{eq-equal-before}, and \eqref{eq-queue lengths-arr} imply that 
\[
m +1 \le p \le l\,.
\]
Then $p\ge1$, \eqref{alpha-beta} yields that 
\begin{align*}
\beta_{p-1}(X(t-)) &= \alpha_{p-1}(X(t-))  + \beta_p(X(t-))\,,
\\
\beta_{p-1}(Y(t-)) &= \alpha_{p-1}(Y(t-))  + \beta_p(Y(t-))\,,
\end{align*}
and \eqref{eq-proof-contra} and \eqref{eq-equal-before} imply that
\[
\alpha_{p-1}(X(t-)) \le \alpha_{p-1}(Y(t-))\,.
\]
We conclude that
\[
\alpha_{p-1}(X(t-)) \le \alpha_{p-1}(Y(t-)) \le \alpha_m(Y(t-))  < \alpha_{l-1}(X(t-)) \le \alpha_{p-1}(X(t-))
\]
which is a contradiction. Therefore, \eqref{eq-proof-contra} is not true.

Second case: $t$ is a potential service completion instant with mark $k$.
The generic network undergoes an effective service completion in the queue of index $\sigma(t-)(k)$  
if and only if $X^{\sigma(t-)(k)}(t-) \ge1$. 
In this case let $i$ be the index of the queue of which the length will be reduced and else let $i = \sigma(t-)(k)$. 
The UR network reduces by one the length of queue~$\tau(t-)(k)$ if it is not zero. In all cases  
\begin{equation}
\label{eq-queue lengths-dep}
l \triangleq   X^i(t-)  \ge X^{\sigma(t-)(k)}(t-)\,,
\qquad
m \triangleq Y^{\tau(t-)(k)}(t-)\,.
\end{equation}
Since $c-\alpha_n$ is the number of coordinates of size at most $n$, the non-decreasing orderings
imply that $k \le  c- \alpha_l(X(t-))$ and $c - \alpha_{m-1}(Y(t-))   < k$ and thus that
\[
\alpha_l(X(t-)) < \alpha_{m-1}(Y(t-))\,.
\]
Moreover \eqref{alpha-beta}, \eqref{eq-proof-contra}, \eqref{eq-equal-before}, and \eqref{eq-queue lengths-arr} imply that 
\[
l \le p \le m-1\,.
\]
Then~\eqref{alpha-beta} yields that 
\begin{align*}
\beta_p(X(t-)) &= \alpha_p(X(t-))  + \beta_{p+1}(X(t-))\,,
\\
\beta_p(Y(t-)) &= \alpha_p(Y(t-))  + \beta_{p+1}(Y(t-))\,,
\end{align*}
and \eqref{eq-proof-contra} and \eqref{eq-equal-before} imply that
\[
\alpha_p(Y(t-))\le \alpha_p(X(t-)) \,.
\]
We conclude that
\[
\alpha_p(Y(t-))\le \alpha_p(X(t-)) \le \alpha_l(X(t-)) < \alpha_{m-1}(Y(t-))  \le \alpha_p(Y(t-))
\]
which is a contradiction. Therefore, again \eqref{eq-proof-contra} is not true.

Applying the falsity of \eqref{eq-proof-contra} to the successive arrival and potential service completion instants $t$ concludes the proof.
\end{proof}

This yields the following important stability result.

\begin{theorem}
\label{thm-rec-pos}
Let $X$ and $Y$ be the coupled queue length processes of a generic load balancing network and of a UR network.
The empty state $0$ is positive recurrent for $X$ if and only if $\lambda<c$, and then $X$ has a unique
invariant law $\pi$, else there is none.
The state $(0,0)$ is positive recurrent for $(X,Y)$ if and only if $\lambda<c$
and then $(X,Y)$ has a unique invariant law $\kappa$ with marginals $\pi$ and $\gamma^{\otimes c}$,
see~\eqref{geom-law},
and else there is none. If $\lambda<c$ then the two-sided stationary version $(X(t),Y(t))_{t\in\R}$
of $(X,Y)$ satisfies  
\[
X(t) \preccurlyeq Y(t)\,, \quad \forall t\in\R\,.
\]
\end{theorem}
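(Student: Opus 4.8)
The plan is to establish positive recurrence of $0$ for $X$ by a domination argument using the UR process $Y$, then transfer everything to the joint chain $(X,Y)$, and finally invoke uniqueness of the two-sided stationary version together with Theorem~\ref{thm-coupling}.

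First I would observe that $X$ is an irreducible continuous-time Markov chain on $\N_0^c$: from any state one can reach $0$ by a sequence of service completions (which are always possible from a non-empty state by Rule~\ref{serv-compl}), and one can leave $0$ by an arrival; the jump rates are bounded (total rate at most $\lambda + c$), so the chain is non-explosive. For the \emph{only if} direction, when $\lambda \ge c$ the total-count process $\beta_0(X(t))$ has upward jumps at rate $\lambda$ and downward jumps at rate at most $c \le \lambda$ (equal to the number of busy servers), so it dominates a null-recurrent or transient birth-death chain and cannot return to $0$ positively; hence no invariant law exists. For the \emph{if} direction, assume $\lambda < c$. Here is where Theorem~\ref{thm-coupling} does the work: start the coupling from $X(0) = Y(0) = 0$, which trivially satisfies $X(0) \preccurlyeq Y(0)$, so $X(t) \preccurlyeq Y(t)$ for all $t$; in particular $\beta_0(X(t)) \le \beta_0(Y(t))$, i.e.\ the total number of tasks in the generic network never exceeds that in the UR network. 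Since $Y$ is a system of $c$ i.i.d.\ positive recurrent $M/M/1$ queues when $\lambda < c$, the UR network empties infinitely often, a.s., and at each instant at which $Y(t) = 0$ we also have $X(t) = 0$ by the preorder. The return times of $Y$ to $0$ have finite mean (explicit for $M/M/1$ systems), and $X$ restarts afresh from $0$ at such instants; a standard regeneration/comparison argument then gives that the expected return time of $X$ to $0$ is finite, so $0$ is positive recurrent for $X$, yielding a unique invariant law $\pi$.

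Next I would repeat this for the joint chain. The pair $(X,Y)$ is Markov by construction of the coupling (the augmented state $(X(t),Y(t),\sigma(t),\tau(t))$ evolves as a Markov chain, and $\sigma,\tau$ are deterministic functions of $X,Y$ under the ascending-index tie-break, so $(X,Y)$ alone is Markov). It is non-explosive for the same reason. When $\lambda \ge c$, already $Y$ fails to be positive recurrent, so $(X,Y)$ admits no invariant law. When $\lambda < c$, the instants at which $Y$ hits $0$ are regeneration times for $(X,Y)$ at which $(X,Y) = (0,0)$, and they have finite mean spacing, so $(0,0)$ is positive recurrent and $(X,Y)$ has a unique invariant law $\kappa$. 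Its second marginal is the invariant law of $Y$, which is $\gamma^{\otimes c}$ by Poisson splitting; its first marginal is an invariant law of $X$, hence equals $\pi$ by uniqueness.

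Finally, for the two-sided statement: positive recurrence gives a unique two-sided stationary version $(X(t),Y(t))_{t\in\R}$ with law $\kappa$ at each time. To see it satisfies $X(t) \preccurlyeq Y(t)$ for all $t \in \R$, fix any $s \in \R$ and run the \emph{coupled} dynamics of Theorem~\ref{thm-coupling} forward from time $s$ using the same driving Poisson processes as in the stationary version, but started from an initial condition $(X(s),Y(s))$ drawn so that $X(s) \preccurlyeq Y(s)$ — for instance from $(0,0)$ at time $s-T$ and letting $T \to \infty$, or more cleanly by noting that $\kappa$ is supported on $\{x \preccurlyeq y\}$. Indeed, the set $\{(x,y) : x \preccurlyeq y\}$ is absorbing for the coupled dynamics by Theorem~\ref{thm-coupling}, it is reached from $(0,0)$, and the coupled chain restricted to this set is still irreducible with the same positive recurrence, so its unique invariant law is $\kappa$; hence $\kappa(\{x \preccurlyeq y\}) = 1$, and therefore $X(t) \preccurlyeq Y(t)$ a.s.\ for every fixed $t$, and then simultaneously for all $t \in \R$ by right-continuity of paths and countability of a dense set of times.

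The main obstacle I anticipate is making the regeneration comparison between $X$ and $Y$ fully rigorous: one must be careful that the instants at which $Y$ returns to $0$ are genuine regeneration times for $X$ (they are, because from $X(t)=0$ the future of $X$ depends only on the driving Poisson processes after $t$, which are independent of the past), and that the mean inter-regeneration time being finite for $Y$ transfers to $X$ — this is immediate since $X$'s excursions away from $0$ are contained in $Y$'s excursions, so $X$'s return time to $0$ is bounded above by $Y$'s. The subtlety is purely bookkeeping about the coupling filtration; the probabilistic content is entirely carried by the preorder-preservation of Theorem~\ref{thm-coupling} and the explicit ergodicity of the $M/M/1$ system.
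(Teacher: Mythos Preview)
Your overall strategy matches the paper's: use Theorem~\ref{thm-coupling} from $(X(0),Y(0))=(0,0)$ to force $X(t)=0$ whenever $Y(t)=0$, deduce positive recurrence of $(0,0)$ for $(X,Y)$ and of $0$ for $X$ from that of $Y$, and handle $\lambda\ge c$ by comparison with a birth--death/$M/M/c$ process. The main substantive slip is your assertion that $X$ is irreducible on $\N_0^c$. Your justification (``one can reach $0$ by service completions, and one can leave $0$ by an arrival'') only shows that $0$ is reachable from every state and is not absorbing; it does not show that every state is reachable from $0$. In fact the paper notes just after this theorem that a network implementing work stealing may well \emph{not} be irreducible, so the claim is false in the generality of Definition~\ref{def-class}. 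The paper's fix is exactly the weaker statement you actually established: since $(0,0)$ is reachable from all states, the complement of its recurrence class consists of transient states carrying zero mass under any invariant law, and restricting to that recurrence class yields existence and uniqueness of $\kappa$ (and likewise $\pi$). Your later remark that ``the coupled chain restricted to $\{x\preccurlyeq y\}$ is still irreducible'' has the same defect and should be replaced by the same reachability-of-$(0,0)$ argument.

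For the two-sided statement you take a somewhat different route than the paper. The paper invokes Palm/cycle-stationary theory: the i.i.d.\ regeneration cycles between successive visits to $(0,0)$ each start with $X\preccurlyeq Y$ and preserve it throughout, hence so does the length-biased cycle straddling the origin. Your argument---that $\{(x,y):x\preccurlyeq y\}$ is absorbing for the coupled dynamics by Theorem~\ref{thm-coupling}, contains the positive recurrent state $(0,0)$, and therefore carries all of $\kappa$, whence $X(t)\preccurlyeq Y(t)$ a.s.\ for each $t$ and then for all $t$ by right-continuity---is a clean alternative that avoids the Palm machinery. Both are valid; yours is arguably more elementary, while the paper's makes the regenerative structure (used again in Section~\ref{sec-Palm}) explicit.
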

\begin{proof}
If  $\lambda<c$ then the queue length process $Y$ of  the UR network is positive recurrent
since it is irreducible and has $\gamma^{\otimes c}$ as an invariant law, which is then unique.
Theorem~\ref{thm-coupling} yields that if  $(X(0),Y(0)) = (0,0)$ then 
$X(t) \preccurlyeq Y(t)$ and in particular $X(t) =0$ if $Y(t) =0$. 
Therefore $(0,0)$ is positive recurrent for $(X,Y)$ and thus~$0$ is positive recurrent for $X$.
Since $(0,0)$ is reachable from all states for $(X,Y)$, the complement of its recurrence class
is constituted of transient states bearing mass zero for any invariant law, and by restriction
to this recurrence class (necessarily irreducible) there exists a unique invariant law for  $(X,Y)$. Likewise for~$X$.
By comparison with a $M/M/c$ queue, $X$ is at best null recurrent if $\lambda = c$ 
and transient if $\lambda> c$.
The result on the two-sided stationary version $(X(t),Y(t))_{t\in\R}$ follows from Palm theory
using the hitting times of $(0,0)$ as regeneration times splitting the process into cycles, 
see~\cite[Sect.~8.3]{Thorisson2000}.
Indeed, the i.i.d.\ cycles start with $0 = X(0) \preccurlyeq Y(0) =0$ and thus satisfy
$X(t) \preccurlyeq Y(t)$ during their lifetimes, hence the length-biased cycle straddling time~$0$ 
must also satisfy $X(t) \preccurlyeq Y(t)$ during its lifetime.
\end{proof}

A load balancing parallel server queueing network is irreducible:
any state leads to the empty state, and in order to attain any given state from the empty state we may
load all queues beyond the maximal length of the given state and then serve the excess tasks. 
A network implementing work stealing may well not be irreducible.

In the sequel we assume that the stability condition $\lambda <c$ holds true.

\section{Considerations for effective simulation}
\label{sec-sim-sym}

\subsection{Sampling from the invariant law of the UR network}
\label{rem-geom}

The perfect simulation algorithms will require to draw repeatedly from the invariant law $\gamma^{\otimes c}$
of the queue length process $Y$ of the uniform routing (UR) network, 
where $\gamma$ is the geometric law~\eqref{geom-law} under the stability condition $\lambda <c$.

A simple option for each draw is to simulate $c$ independent draws from $\gamma$. 
A bounded time inversion method using truncated exponential random variables 
is given in Devroye~\cite[Sect. X.2.2]{Devroye-1986}. It is noted in an exercise that an accuracy problem
in computing $\log (1-p)$ could arise if the success probability  $p$ is small, and an expansion method is proposed, but 
here $p = 1-\lambda/c$ and we can use $\log (\lambda/c )= \log \lambda - \log c $.

Another option is to first draw the total number of tasks from the negative binomial law  
$\mathrm{NB}(c,1-\lambda/c)$ and then draw from a multinomial to attribute tasks to the $c$ queues.
Draws from $\mathrm{NB}(c,1-\lambda/c)$ can be performed by inverting the c.d.f., which is a regularized incomplete 
beta function, or by other uniformly fast methods as in~\cite[Sect.~X.1.2 Example~1.5, Sect.~X.4.7]{Devroye-1986}.
Methods for drawing from a multinomial with mean durations which are $O(c)$ uniformly in the 
total number of tasks exist, such as~\cite[Sect.~XI.1.4 Example~1.5]{Devroye-1986}.

\subsection{Exchangeable load balancing policies}
\label{sec-exch}

This subsection is devoted to exchangeable  load balancing policies, see Definition~\ref{def-class}. 
A reduced representation of $X = (X(t))_{t\in\R_+}$ and of $Y = (Y(t))_{t\in\R_+}$ ``up to instantaneous
permutations of the coordinates'' is then adequate to study the invariant law. 
The exchangeability  assumption is equivalent to assuming that these reduced representations are themselves Markov. 
An added benefit is that the preorder $\preccurlyeq$ of Theorem~\ref{thm-order} induces an order in this context.

For $x\triangleq (x^i)_{1\le i\le c}$ in $\N_0^c$ let $\acute{x} \triangleq (\acute{x}^i)_{1\le i\le c}$ in $\N_0^c$ 
be the vector of its coordinates set in non-decreasing order and
\begin{equation*}
h_n(x) \triangleq \sum_{i=1}^c \ind_{\{x^i \le n\}} = c- \alpha_n(x) 
\end{equation*}
be the number of its coordinates not exceeding $n$ in $\N_0$. Setting $h_{-1}(x)= 0$,
\begin{equation}\label{h-and-x}
\acute{x}^i =n \iff h_{n-1}(x)+1 \le i \le h_n(x) \,,
\qquad
h_n(x) = h_n(\acute{x})\,.
\end{equation}
Note that $x$ can be sorted into $\acute{x}$ in $O(c \log c)$ time, and that $x\sim y \Leftrightarrow \acute{x} = \acute{y}$, see Theorem~\ref{thm-order}.
Let $\acute{X}(t)$ and $\acute{Y}(t)$ be the non-decreasing reorderings of $X(t)$ and $Y(t)$.

The processes $\acute{X} \triangleq (\acute{X}(t))_{t\in\R_+}$ and $\acute{Y} \triangleq (\acute{Y}(t))_{t\in\R_+}$ are reduced representations of $X$ and $Y$ which are $(\Fc_t)_{t\in\R_+}$-Markov and evolve as follows.
\begin{enumerate}
\item\label{arrival-update-exch}
Let $t>0$ be an arrival instant with mark $k$. The task is allocated according to policy to a queue of 
length $n\le \acute{X}^{k}(t-)$, and the single coordinate update is 
\[
\acute{X}^{h_n(\acute{X}(t-))}(t)  = \acute{X}^{h_n(\acute{X}(t-))}(t-)+1= n+1\,.
\]
\item\label{depart-update-exch}
Let $t>0$ be a potential service completion instant with mark $k$.
It is effective if and only if $\acute{X}^{k}(t-)\ge1$, and then a task is removed according to policy
from a queue of length $n\ge \acute{X}^{k}(t-) \ge1$, and the single coordinate update is 
\[
\acute{X}^{h_{n-1}(\acute{X}(t-))+1}(t)  = \acute{X}^{h_{n-1}(\acute{X}(t-))+1}(t-) -1= n-1\,.
\]
\end{enumerate}
The determination of the queue of length $n$ by the policy is Markov due to exchangeability.
For instance, JSQ($d$) may choose $k_1 = k\,, k_2\,,\dots \,, k_d$ uniformly and 
take $n = \acute{X}^{\min\{k_1,\dots,k_d\}}(t-)$
in \ref{arrival-update-exch}.  
The special case $\acute{Y}$ uses $n=\acute{Y}^{k}(t-) $
in \ref{arrival-update-exch} and \ref{depart-update-exch}.

An actual code can be managed efficiently by storing $\acute{X}(t)$ and $\acute{Y}(t)$ each in an array of $c$ cells and using $\eqref{h-and-x}$ at each step. 
Indeed, $h_{n-1}(\acute{X}(t))+1$ and $h_n(\acute{X}(t))$ are the indices of the first and last cells storing  the value $n$ in $\acute{X}(t)$ and can  be found in $O(\log c)$ time by binary search of the sorted array $\acute{X}(t)$.
Usually the load balancing policy provides an index $j$ such that $\acute{X}^j(t-) = n$, and the search for $h_{n-1}(\acute{X}(t))+1$ [resp.\ $h_n(\acute{X}(t))$] can be limited to the cells before 
[resp.\ after] the $j$-th.

\begin{remark}
\label{rem-exch-or-not}
We shall develop perfect simulation methods for $\pi$ which use the coupling $(X,Y)$,
and actually simply $(X,\acute{Y})$. 
The arguments are readily adapted for exchangeable policies using $(\acute{X},\acute{Y})$, providing 
perfect simulation methods for the invariant law $\acute{\pi}$ of $\acute{X}$.  
If $f$ is symmetric then $\int f d\pi = \int f d\acute{\pi}$, \emph{e.g.}, the $\beta_n$ are of this form,
else a uniform random permutation of the coordinates of a sample from $\acute{\pi}$, independent of the rest, 
yields a sample from $\pi$. 
\end{remark}

\subsection{General load balancing policies}
\label{sec-pract-man}

Certain load balancing policies  need consider the specific indices of the queues. 
For instance the network may have a directed graph structure and may implement policies such as those in  Section~\ref{sec-examples} only on the graph neighborhood of the uniformly chosen queue.
Then $X$ is not exchangeable and $\acute{X}$ is not Markov. 

Sorting $X(t)$ at each update would take $O(c \log c)$ time.
It is usually much more efficient to store and update
\begin{equation}\label{eq-store-x-sig}
X(t) \triangleq (X^i(t))_{1\le i \le c}\,,
\qquad
\sigma(t) \triangleq (\sigma(t)(i))_{1\le i \le c}\,,
\end{equation} 
satisfying~\eqref{eq-sigma-tau}, in two arrays of $c$ cells, and use when needed that
\begin{equation}\label{eq-acute-sig}
\acute{X}^i(t) = X^{\sigma(t)(i)}(t)\,,
\qquad
1\le i\le c\,,
\end{equation}
for instance to find $h_n(X(t))=h_n(\acute{X}(t))$ by binary search of $\acute{X}(t)$, see Section~\ref{sec-exch}, which uses only the elements for comparison and takes $O(\log c)$ time.

For the sake of simplicity, the tie-breaking rule used for $\sigma(t)$ in the exposition was by 
ascending queue indices, in which case every update of $\sigma(t)$ may require up to $c$ memory swaps. 
The following tie-breaking rule has efficient updates swapping at most the contents of two memory cells. 
The results of the exposition can readily be extended to it by considering the 
$(\Fc_t)_{t\in\R_+}$-Markov process $(X(t),\sigma(t))_{t\in\R_+}$.

\begin{enumerate}\item\label{arrival-update-gen}
Let $t>0$ be an arrival instant with mark $k$. The task is allocated to a queue of index~$i = \sigma(t-)(j)$ 
and length $n\le X^{\sigma(t-)(k)}(t-)$ and the updates are
\begin{gather*}
X^i(t) = X^i(t-) +1 = n+1\,,
\\
\sigma(t)(h_n(X(t-)))=\sigma(t-)(j) =i\,,
\quad
\sigma(t)(j) = \sigma(t-)(h_n(X(t-)))\,,
\end{gather*}
\emph{i.e.}, $h_{n-1}(X(t-)) < j\le h_n(X(t-))$ and $\sigma(t)$ is obtained from $\sigma(t-)$ by swapping the index $i$ 
in cell~$j$ with the index in cell $h_n(X(t-))$. This propagates~\eqref{eq-sigma-tau}. 
\item\label{depart-update-gen}
Let $t>0$ be a potential service completion instant with mark $k$.
It is effective if and only if $X^{\sigma(t-)(k)}(t-)\ge1$, and then a task is removed from a queue 
of index~$i = \sigma(t-)(j)$ and length $n\ge X^{\sigma(t-)(k)}(t-) \ge1$ and the only updates are
\begin{gather*}
X^i(t) = X^i(t-) -1 = n-1\,,
\\
\sigma(t)(h_{n-1}(X(t-))+1) =\sigma(t-)(j) =i\,,
\quad
\sigma(t)(j) = \sigma(t-)(h_{n-1}(X(t-))+1)\,.
\end{gather*}
\emph{i.e.}, $h_{n-1}(X(t-)) < j\le h_n(X(t-))$ and $\sigma(t)$ is obtained from $\sigma(t-)$ by swapping the index 
$i$ in cell~$j$ with the index in cell $h_{n-1}(X(t-))+1$. This propagates~\eqref{eq-sigma-tau}. 
\end{enumerate}

The special case $Y$ has $j=k$ in \ref{arrival-update-gen} and 
\ref{depart-update-gen}, but we shall see that we need only simulate $\acute{Y}$ 
as in Section~\ref{sec-exch} in the perfect simulation algorithms.

\subsection{Subordination and embedded Markov chain}\label{sec-emb-mc}

Let $(T_n)_{n\ge1}$ denote the sequence of arrival and potential service completion instants set in increasing order and
\[
\theta_n =
\left\{
\begin{aligned}
& (1,k) &&\text{if $T_n$ is an arrival instant with mark $k$,}
\\
& (-1,k) &&\text{if $T_n$ is an potential service completion instant with mark $k$.}
\end{aligned}
\right.
\]
Classically, $(T_n,\theta_n)_{n\ge1}$ is a $(\mathcal{F}_t)_{t\in\R_+}$-marked Poisson process 
of intensity $\lambda + c$, and the marks $\theta_n$ have common law 
\begin{equation}\label{law-marks}
\biggl(\frac{c}{\lambda +c}\delta_{-1} + \frac{\lambda}{\lambda +c}\delta_1\biggr)
\otimes \frac{1}{c}(\delta_1 + \cdots + \delta_c)
\;\;\text{on}\;\;  \Theta \triangleq \{-1,1\} \times \{1,\dots,c\}\,.
\end{equation}

The Markov process $(X(t),Y(t))_{t\in\R_+}$ is subordinate to
$(T_n)_{n\ge1}$, and thus has same invariant law $\kappa$ as the embedded $(\Fc_{T_n})_{n\in\N_0}$-Markov chain 
$(X_n,Y_n)_{n\in\N_0}$ defined by
\begin{equation*}
(X_0,Y_0) = (X(0),Y(0))\,,
 \qquad
 (X_n,Y_n) = (X(T_n),Y(T_n))\,.
\end{equation*}
Simulating it avoids the simulation of exponential random variables $\mathcal{E}(\lambda+c)$.

We wish to implement the coupling $(X_n,Y_n)_{n\in\N_0}$ using update functions and i.i.d.\ random draws.
For $n\ge1$, $X_n$ is obtained from $X_{n-1}$ and $\theta_n$ according to the load balancing policy
in a way which may involve identically distributed random variables independent of $\Fc_{T_{n-1}}$,
\emph{e.g.}, to break ties between queues of equal length or implement JSQ($d$),
and $Y_n$ is obtained from $Y_{n-1}$ and $\theta_n$ deterministically, as in Section~\ref{sec-coupling}.
We write this
\begin{equation}
\label{update-fct}
 X_n= \phi(X_{n-1},\theta_n,U_n)\,, \quad Y_n =\psi(Y_{n-1},\theta_n)\,,
\end{equation}
using deterministic update functions $\phi:\N_0^c \times \Theta\times [0,1]\mapsto \N_0^c$ and $\psi : \N_0^c \times \Theta\mapsto \N_0^c$ and a $(\Fc_{T_n})_{n\in\N_0}$-adapted sequence $(U_n)_{n\ge1}$ of
i.i.d.\ uniform random variables on $[0,1]$ independent of $(X_0,Y_0, T_n,\theta_n;n\ge1)$. 

If $X$ uses an exchangeable policy as does $Y$, then $(\acute{X}_n)_{n\in\N_0}$ is a $(\mathcal{F}_{T_n})_{n\in\N_0}$-Markov chain itself, see Section~\ref{sec-exch}, and we can write
\begin{equation}\label{update-fct-exch}
 \acute{X}_n= \acute{\phi}(\acute{X}_{n-1},\theta_n,U_n)\,, \quad 
 \acute{Y}_n =\acute{\psi}(\acute{Y}_{n-1},\theta_n)\,.
\end{equation}

In the sequel we use $(X_n,Y_n)_{n\in\N_0}$ and \eqref{update-fct} in three perfect simulation algorithms for the invariant law $\pi$ of the queue length process $X$ of a load balancing network.  
The invariant law $\kappa$ has marginals $\pi$ and 
$\gamma^{\otimes c}$, and  $(Y_n)_{n\in\N_0}$ in equilibrium is reversible  
and will be used as a simulatable dominating Markov chain in this infinite state space.
We shall see we need only simulate $(\acute{Y}_n)_{n\in\N_0}$ using~\eqref{update-fct-exch}
for our purposes. 

For the sake of simplicity, the exposition breaks ties by ascending queue index.
In order to use the efficient tie-breaking rule of Section~\ref{sec-pract-man}, we need only consider 
$(X_n,\sigma_n)= \phi(X_{n-1}, \sigma_{n-1},\theta_n,U_n)$, etc., where  $\sigma_n = \sigma(T_n)$. 

Exchangeable policies require only to simulate $\acute{\pi}$ using $(\acute{X}_n,\acute{Y}_n)_{n\in\N_0}$ 
and~\eqref{update-fct-exch}.

\section{Palm theory and acceptance-rejection}\label{sec-Palm}

A classic result in Palm theory, see~\cite[Chap.~8,~10]{Thorisson2000}, is that a stationary version on $\R$ of a cycle-stationary process can be obtained by length-biasing the law of the cycle-stationary process by the length of the cycle straddling the origin and then setting the origin uniformly in the length of this cycle. 
The trace on $\R_+$ of the cycle straddling the origin is called the delay. 

In~\cite[Sect.~8.6.7]{Thorisson2000} the simulation of the stationary delay is obtained by
the following acceptance-rejection method.
\begin{enumerate}
\item
Simulate the length of a stationary delay.
\item
Simulate independent copies of the cycles, until the length of a cycle exceeds the 
previously simulated stationary delay length.
\item
The part of this cycle from the stationary delay length onward constitutes a simulation of the stationary delay. 
In particular, the state of this cycle at the stationary delay length is a draw from the invariant law.
\end{enumerate}
Moreover, the number of trials is finite, a.s., but has infinite expectation. 
This continuous-time result can be applied to discrete time as in~\cite[Sect.~10.2.6]{Thorisson2000}.

Sigman~\cite{Sigman2012} provides an interesting implementation of this result in discrete time,
and gives it a short direct proof in its Prop.~2.1. We adapt this implementation here as a warm-up 
for illustrative purposes, since the infinite expectation of the number of trials renders it unsuitable for 
actual Monte Carlo simulation.
The situation is simpler than for DomCFTP since the simulations are all in direct time and there is no difficulty
in preserving the coupling by using the same marks $\theta_n$ in~\eqref{update-fct}.

The hitting times of $(0,0)$ by the Markov chain $(X_n,Y_n)_{n\in\N_0}$ are regeneration instants. 
Theorems~\ref{thm-coupling} and~\ref{thm-rec-pos} yield that if $(X_0,Y_0) = (0,0)$ 
or follows the invariant law $\kappa$ then $X_n \preccurlyeq Y_n$ for all $n$ in $\N_0$. Thus in both cases 
these regeneration instants are given by the hitting times of~$0$ of  the Markov chain $(Y_n)_{n\in\N_0}$.
These are the same as those of $(\acute{Y}_n)_{n\in\N_0}$, and we need only simulate the latter.

Hence, the perfect simulation method first obtains the length $L$ of a stationary delay by simulating an instance of 
$(\acute{Y}_n)_{n\in\N_0}$ started at a draw from its invariant law, obtained by sorting in non-decreasing order a draw 
from $\gamma^{\otimes c}$ (Section~\ref{rem-geom}). It then determines the first cycle of length exceeding $L$ 
by simulating instances of $(\acute{Y}_n)_{n\in\N_0}$ started at $0$. 
When this is obtained, the method reuses the marks of this cycle in \eqref{update-fct} to simulate the coupled $X_0, \dots, X_L$, and outputs $X_L$ as a perfect draw from $\pi$. 

For exchangeable load balancing policies, the same reasoning is valid to simulate the invariant law $\acute{\pi}$ of 
$\acute{X}$ by using \eqref{update-fct-exch} in place of \eqref{update-fct}; see Remark~\ref{rem-exch-or-not}.

We recapitulate this in the following perfect simulation algorithm, written in mathematical fashion and not in pseudocode.
It has infinite expected running time and is not to be implemented in practice.

\begin{algorithm}[H] 
\caption{Acceptance-rejection algorithm}\label{alg-acc-rej}
\renewcommand\algorithmicdo{} 
\algnotext{EndFor}
\begin{algorithmic}[1]
\State draw $Y_0$ from $\gamma^{\otimes c}$ and sort into $\acute{Y}_0$
\Comment see Section~\ref{rem-geom} 
\For{$n=1$ to $L \triangleq \inf\{n\ge 1: \acute{Y}_n =0\}$}\label{a1-delay-iter}
	\State draw  $\acute{Y}_n$ using $\acute{Y}_{n-1}$
	\Comment using \eqref{update-fct-exch} or otherwise
\EndFor
\For{$k=1$ to $K\triangleq \inf\{k\ge 1 : L^{(k)} = L\}$}\label{a1-coeur}
	\State set $\acute{Y}^{(k)}_0=0$
	\For{$n=1$ to $L^{(k)}\triangleq \inf\{n\ge 1 : \acute{Y}^{(k)}_n =0\} \wedge L$}
		\State draw $\theta^{(k)}_n$ from \eqref{law-marks} 
		\State set $\acute{Y}^{(k)}_n = \acute{\psi}(\acute{Y}^{(k)}_{n-1},\theta^{(k)}_n)$ 
	\EndFor
\EndFor
\State set $X^{(K)}_0 = 0$ 
\For{$n=1$ to $L$}\label{a1-perf-sim-iter}
	\State draw $U_n$ uniformly in $[0,1]$ 
	\Comment for possible random choices
	\State set $X^{(K)}_n=\phi(X^{(K)}_{n-1},\theta^{(K)}_n,U_n)$
	\Comment the $\theta^{(K)}_n$ of Loop~\ref{a1-coeur} 
\EndFor
\State\Return $X^{(K)}_L$
\end{algorithmic}
\end{algorithm}

\begin{theorem}
Let $\lambda < c$, and $\pi$ be the invariant law of the queue length process $X$ of a load balancing network
of Definition~\ref{def-class}. 
Algorithm~\ref{alg-acc-rej} uses simulatable draws, terminates in an a.s.\ finite time with infinite expectation,
and outputs a perfect draw from $\pi$, \emph{i.e.}, 
$\P(K<\infty)=1$, $\E(K) = \infty$, and  $X^{(K)}_L$ has law $\pi$.
If the load balancing policy is exchangeable, Algorithm~\ref{alg-acc-rej} with $X^{(K)}$ and \eqref{update-fct} 
replaced by $\acute{X}^{(K)}$ and \eqref{update-fct-exch} in Loop~\ref{a1-perf-sim-iter} 
outputs a perfect draw from the invariant law $\acute{\pi}$ of $\acute{X}$; see Remark~\ref{rem-exch-or-not}.
\end{theorem}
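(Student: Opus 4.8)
The plan is to recognise Algorithm~\ref{alg-acc-rej} as the Thorisson--Sigman acceptance--rejection construction of a stationary delay, see~\cite[Sect.~8.6.7, 10.2.6]{Thorisson2000} and~\cite[Prop.~2.1]{Sigman2012}, applied to the regenerative embedded chain $(X_n,Y_n)_{n\in\N_0}$ of Section~\ref{sec-emb-mc}. Two things must be checked: that every step is simulatable and that the algorithm faithfully realises that construction, so that $X^{(K)}_L$ has law $\pi$; and the run-time estimates.

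For the structural part I would argue as follows. By Theorem~\ref{thm-rec-pos} the chain $(X_n,Y_n)$ is positive recurrent with $(0,0)$ a regeneration state, and its invariant law $\kappa$ has $X$-marginal $\pi$. By Theorem~\ref{thm-coupling}, started from $(X_0,Y_0)=(0,0)$ it satisfies $X_n\preccurlyeq Y_n$, hence $Y_n=0\Rightarrow X_n=0$; therefore the hitting times of $(0,0)$ by $(X_n,Y_n)$ are exactly the hitting times of $0$ by $Y_n$, equivalently by $\acute Y_n$, so the successive excursions of $\acute Y$ from $0$ are in length-preserving bijection with the cycles of $(X_n,Y_n)$. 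Consequently: Loop~\ref{a1-delay-iter} runs $\acute Y$ from a draw of its invariant law --- obtained by sorting a $\gamma^{\otimes c}$ draw, see Section~\ref{rem-geom} and the reversibility in Theorem~\ref{thm-rec-pos} --- until it hits $0$, and so produces a delay length $L$ whose law is the residual law $\P(L=\ell)=\P(N\ge\ell)/\E(N)$, $\ell\ge1$, of a generic cycle length $N$ (classical for the first hitting time of a regeneration state by a stationary chain, and readily verified); Loop~\ref{a1-coeur} produces i.i.d.\ copies of generic cycle lengths and returns the index $K$ of the first one that is $\ge L$. The acceptance event $\{L^{(k)}=L\}$ depends only on the marks $(\theta^{(k)}_n)_n$, hence is independent of the auxiliary uniforms; so in Loop~\ref{a1-perf-sim-iter}, re-driving the update functions~\eqref{update-fct} with those marks and fresh uniforms $(U_n)_n$ yields a path $(X^{(K)}_0,\dots,X^{(K)}_L)$ distributed as the $X$-component, over $\{0,\dots,L\}$, of a generic cycle of $(X_n,Y_n)$ conditioned to have length $\ge L$. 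This is precisely the object of the construction in~\cite[Prop.~2.1]{Sigman2012}, whose terminal state therefore has law $\kappa$, so $X^{(K)}_L$ has law $\pi$. The remaining simulatability points are routine: drawing from $\gamma^{\otimes c}$ is treated in Section~\ref{rem-geom}, drawing the marks $\theta_n$ from~\eqref{law-marks} is elementary, and $\acute\psi,\phi$ are by construction computable update functions.

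For the run-time part, once $L$ is fixed the trials in Loop~\ref{a1-coeur} are i.i.d.\ Bernoulli, trial~$k$ succeeding iff the generic cycle length satisfies $N\ge L$, with success probability $\P(N\ge L)$ given $L$. From $0$ the chain may see $\ell-1$ consecutive arrivals, so $\P(N\ge\ell)\ge(\lambda/(\lambda+c))^{\ell-1}>0$ for every $\ell$, while $L<\infty$ a.s.\ by recurrence of $\acute Y$; hence, conditionally on $L$, the index $K$ is geometric with parameter in $(0,1]$, so $\P(K<\infty)=1$ and the algorithm terminates in a.s.\ finite time. For the mean, positive recurrence gives $\E(N)<\infty$, so $\P(L=\cdot)$ above is a genuine law and
\[
\E(K)=\sum_{\ell\ge1}\frac{\P(L=\ell)}{\P(N\ge\ell)}=\frac{1}{\E(N)}\sum_{\ell\ge1}1=\infty\,,
\]
whence the expected run time is infinite as well. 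For an exchangeable policy the same argument applies verbatim with $(\acute X,\acute\phi)$ and~\eqref{update-fct-exch} replacing $(X,\phi)$ and~\eqref{update-fct}: $\acute X$ is then $(\Fc_{T_n})_{n\in\N_0}$-Markov (Section~\ref{sec-exch}) and shares the same cycle structure, giving $\acute X^{(K)}_L$ of law $\acute\pi$; see Remark~\ref{rem-exch-or-not}. I expect the main obstacle to be the structural correctness --- verifying that the algorithm really does realise the Thorisson--Sigman construction for this chain, notably the length-preserving bijection between the cycles of $(X_n,Y_n)$ and the excursions of $\acute Y$ from $0$ and the independence of the acceptance event from the uniforms $(U_n)_n$ --- with the underlying distributional identity imported from~\cite[Prop.~2.1]{Sigman2012}.
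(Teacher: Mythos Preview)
Your proposal is correct and follows the same approach as the paper: the paper's own proof is simply ``The proof is given in the preceding discussion,'' and that discussion is precisely the Thorisson--Sigman acceptance--rejection scheme applied to the regenerative chain $(X_n,Y_n)$, with the key observation (via Theorems~\ref{thm-coupling} and~\ref{thm-rec-pos}) that the hits of $(0,0)$ coincide with the hits of $0$ by $\acute Y$. You in fact go further than the paper by supplying the explicit computation $\E(K)=\sum_{\ell\ge1}\P(L=\ell)/\P(N\ge\ell)=\E(N)^{-1}\sum_{\ell\ge1}1=\infty$ (together with the lower bound $\P(N\ge\ell)>0$), which the paper merely imports from~\cite{Thorisson2000,Sigman2012}; one minor citation slip is that reversibility of $Y$ is not stated in Theorem~\ref{thm-rec-pos} but rather after~\eqref{geom-law}.
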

\begin{proof}
The proof is given in the preceding discussion.
\end{proof}

\section{Dominated Coupling From The Past (DomCFTP)}\label{sec-DomCFTP}

\subsection{Background}

Coupling From The Past (CFTP) was introduced by Propp and Wilson~\cite{Propp1996}.
It concerns a Markov chain, on a finite state space with an order and a least and a largest state, 
which can be constructed using an update function preserving this order and an i.i.d.\ sequence of random variables.
This seminal paper provided impressive applications to the Ising model close to criticality
and much general insight.

There is first a \emph{back-off} strategy.
Instances of the Markov chain are repeatedly started from the least and the largest states
further and further back in the past before time~$0$.
All these instances are coupled by always using the same i.i.d. random variables driving 
the updates from times $n-1$ to $n$.
This is repeated until \emph{coalescence} happens: the two instances have coalesced by time~$0$, \emph{i.e.},
they have taken the same value and necessarily remained equal ever since. 
Due to the preservation of the order, all instances of the Markov chain started arbitrarily
at the corresponding time back in the past must have coalesced, hence
the common value obtained at time~$0$ must coincide with the value of the Markov chain in equilibrium 
at time~$0$ and constitute a perfect draw from the invariant law.

Dominated CFTP (DomCFTP) extends this to infinite state spaces such as $\N_0^c$,
see~\cite{Kendall1998,Kendall2005,Kendall-Moller2000,Thonnes2000,Connor-Kendall2015,Huber2016}.
DomCFTP simulates backward in time a dominating Markov chain in order to obtain lower and upper bounds
for the Markov chain in equilibrium, coupled in the subsequent forward simulation together with the
dominating chain.

\subsection{DomCFTP at the empty state}

A novelty here is the use of the preorder of Section~\ref{sec-sto-order} in the truly multi-dimensional setting of queueing networks. 
The  order ``up to coordinate permutation'' on the quotient set for the indifference relation $\sim$ is strictly weaker than the classic product order, and the latter is not in general preserved by the coupling of  Section~\ref{sec-coupling}.

Perfect simulation methods for non-exchangeable polices are of great interest, see Remark~\ref{rmk-exch-graph}, and require the use of the preorder with its lack of anti-symmetry.
If the policy is exchangeable then we can use the reduced representation of Section~\ref{sec-exch} on which the preorder induces an order; we shall see a use of this in Section~\ref{sec-sandwich}.

A way to implement DomCFTP is to wait for the dominating chain simulated backward in time to hit $0$.
At this random time, the chain in equilibrium must be at state $0$ (coalescence is achieved without anti-symmetry) and can be simulated forward in time up to time~$0$, at which time its state constitutes a perfect draw from $\pi$.

A delicate point is that the Markov chains must be coupled together through the
backward and forward simulations by using the same driving i.i.d.\ random variables.
We deal with this by considering the two-sided stationary Markov chain
\[
(X_n,\acute{Y}_n,\theta_{n+1})_{n\in\Z}\,.
\]

\begin{theorem}
\label{theta-from-y}
The two-sided stationary Markov chain $(\acute{Y}_n,\theta_{n+1})_{n\in\Z}$ in equilibrium can be perfectly simulated
backward in time from time $0$.  Using reversibility, $(\acute{Y}_n)_{n\in\Z_-}$ can be simulated by
drawing $Y_0$ of law $\gamma^{\otimes c}$, sorting it into $\acute{Y}_0\triangleq \acute{Y}_0'$,
simulating a copy $(\acute{Y}'_n)_{n\in\N_0}$ of $(\acute{Y}_n)_{n\in\N_0}$,
and setting $\acute{Y}_n = \acute{Y}'_{-n}$ for $n\le 0$. 
Moreover,
\begin{itemize}
\item
if $\acute{Y}_{n+1} = y+e_i$ and $\acute{Y}_n =y$ and $y^i = m$ for $y$ in $\N_0^c$ and $i$ in $\{1,\dots,c\}$,
then 
\[
\theta_{n+1} = (1,K) \text{ with } K \text{ uniform in }
\{h_{m-1}(y) +1\,, \dots\,, h_m(y) = i\}\neq\emptyset\,,
\]
\item
if $\acute{Y}_{n+1} = y-e_i$ and $\acute{Y}_n =y$ and $y^i = m\ge1$ for $y$ in $\N_0^c$ and $i$ in $\{1,\dots,c\}$,
then 
\[
\theta_{n+1} = (-1,K) \text{ with } K \text{ uniform in }
\{h_{m-1}(y) +1 = i\,, \dots\,, h_m(y)\}\neq\emptyset\,,
\]
\item
if $\acute{Y}_{n+1} = y$ and $\acute{Y}_n = y$ for $y$ in $\N_0^c$, then 
\[
\theta_{n+1} = (-1,K) \text{ with } K \text{ uniform in }
\{1\,, \dots\,, h_0(y)\}\neq\emptyset\,.
\]
\end{itemize}
\end{theorem}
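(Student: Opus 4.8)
The plan is to establish two separate things: first, that the pair process $(\acute Y_n,\theta_{n+1})_{n\in\Z}$ admits a two-sided stationary version that is simulatable backward in time, and second, the explicit conditional law of $\theta_{n+1}$ given the transition $\acute Y_n\to\acute Y_{n+1}$. For the first part, I would begin by recalling from Section~\ref{sec-defs} that $Y$ is a system of $c$ i.i.d.\ $M/M/1$ queues, positive recurrent under $\lambda<c$ with invariant law $\gamma^{\otimes c}$, and that $\acute Y$ is its non-decreasing reordering, a Markov chain by the exchangeability discussion of Section~\ref{sec-exch}. The subordinated embedded chain $(\acute Y_n)_{n\in\N_0}$ has invariant law obtained by sorting $\gamma^{\otimes c}$. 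The key observation is that $(\acute Y_n)_{n\in\Z}$ in equilibrium is reversible: $Y$ itself is reversible in equilibrium (product of reversible $M/M/1$ queues), reversibility is preserved under the deterministic symmetric map ``sort the coordinates'', and reversibility of the continuous-time process passes to the embedded jump chain since the jump instants $(T_n)$ form a Poisson process independent of nothing that breaks time-symmetry. Hence $(\acute Y_{-n})_{n\in\N_0}$ has the same law as $(\acute Y_n)_{n\in\N_0}$, which justifies the stated backward simulation recipe: draw $\acute Y_0$ from the sorted $\gamma^{\otimes c}$, run the forward chain to get $(\acute Y_n')_{n\ge 0}$, and set $\acute Y_{-n}=\acute Y_n'$.

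For the mark $\theta_{n+1}$, the plan is to compute, for a given realized transition of $\acute Y$, the conditional distribution of the mark that produced it. The unsorted UR chain has $Y_n=\psi(Y_{n-1},\theta_n)$ with $\theta_n$ distributed as in \eqref{law-marks}; since UR routes task $k$ to queue $k$ itself (i.e.\ $i=j$ in Rules~\ref{task-arr}--\ref{work-steal}), an arrival mark $(1,k)$ increments queue $k$ and a potential completion mark $(-1,k)$ decrements queue $k$ iff $Y^k>0$, else leaves the state unchanged. Translating this to the sorted chain via \eqref{h-and-x}: an \emph{effective arrival} that takes $\acute Y_n=y$ to $y+e_i$ with $y^i=m$ occurs precisely when the uniform mark $k$ points to a coordinate of $Y$ currently holding value $m$, which in sorted indices is the cell range $\{h_{m-1}(y)+1,\dots,h_m(y)\}$, and by \eqref{h-and-x} the convention $\acute y^i=m$ with the increment landing in the last such cell gives $h_m(y)=i$; conditionally on this transition the mark $k$ (expressed in sorted order $K$) is uniform on that range — this is just the uniform-conditioning argument invoked in the proof of Theorem~\ref{thm-coupling}. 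The completion case is symmetric, with the decrement landing in the first cell holding value $m$, so $h_{m-1}(y)+1=i$ and $K$ uniform on $\{h_{m-1}(y)+1,\dots,h_m(y)\}$. The ``no change'' case ($\acute Y_{n+1}=\acute Y_n=y$) can only arise from a potential-completion mark pointing at an empty queue, i.e.\ $k$ in the sorted range $\{1,\dots,h_0(y)\}$ of zero-valued cells, again uniform on that set; note $h_0(y)>0$ here is forced because a state-unchanged step of the subordinated chain requires at least one empty queue.

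The main obstacle I anticipate is handling the \emph{joint} law rather than the two marginals separately: what is actually needed downstream (for the DomCFTP coupling of $X$ with $\acute Y$) is that $(\acute Y_n,\theta_{n+1})_{n\in\Z}$ as a whole is stationary and that its time-reversal is simulatable, so one must check that reconstructing $\theta_{n+1}$ from the pair $(\acute Y_n,\acute Y_{n+1})$ by the stated uniform draws — performed with fresh independent randomness during the backward pass — yields a process with the correct joint law, including the correct dependence across time. The cleanest way is to argue that the map $(\,(\acute Y_n)_{n},\ (\text{i.i.d.\ uniform tie-breakers})\,)\mapsto (\acute Y_n,\theta_{n+1})_n$ has the right law by construction on the forward chain, invoke reversibility of $(\acute Y_n)_n$, and note that the tie-breaking randomness is exchangeable in time and independent of the path, so it may be re-drawn along the reversed path without changing the law of the reconstructed pair process. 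A minor point to verify carefully is that the three cases listed are exhaustive and mutually exclusive given $(\acute Y_n,\acute Y_{n+1})$ — in particular that an increment/decrement transition uniquely determines which value $m$ changed and hence the index $i$, which follows from $\acute Y$ being the sorted vector — and that each displayed set is nonempty, which is exactly what the ``$\neq\emptyset$'' annotations assert and which I would dispatch by the counting identities in \eqref{h-and-x}.
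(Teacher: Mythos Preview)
Your plan is correct and follows essentially the same approach as the paper's proof: reversibility of $(\acute Y_n)$ is declared standard there, and the conditional law of $\theta_{n+1}$ given the transition $(\acute Y_n,\acute Y_{n+1})$ is obtained by a direct Bayes-type conditioning computation (the paper works out the third bullet explicitly and declares the other two similar). Your added discussion of the joint-law-across-time concern---that the $\theta_{n+1}$'s must be conditionally independent given the full $\acute Y$-path so that fresh uniforms suffice in the backward reconstruction---is a valid point the paper leaves implicit, and your sketch of its resolution via the Markov structure is sound.
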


\begin{proof}
The statement about reversibility is standard. The statements about uniform laws follow from
the fact that the $\theta_n$ have law~\eqref{law-marks} and~\eqref{h-and-x} and conditioning. 
For completeness, we provide a direct proof of the last case, the others being similar. 
If $k$ is in $\{1\,, \dots\,, h_0(y)\}\neq \emptyset$ then $y^k=0$ and
\begin{align*}
\P(\theta_{n+1}  = (-1,k) \,|\, \acute{Y}_{n+1} =y, \acute{Y}_n = y) 
&\triangleq \frac{\P(\acute{Y}_n = y, \theta_{n+1}  = (-1,k),\acute{Y}_{n+1} =y)}{\P(\acute{Y}_n = y, \acute{Y}_{n+1} =y)}
\\
&= \frac{\P(\acute{Y}_n = y, \theta_{n+1}  = (-1,k))}{\P(\acute{Y}_n = y, \acute{Y}_{n+1} =y)}
\\
&= \frac{\P( \acute{Y}_n = y) \,\P(\theta_{n+1}  = (-1,k))}{\P(\acute{Y}_n = y) \,\P(\acute{Y}_{n+1} =y \,|\, \acute{Y}_n = y)}
\end{align*}
and $\P(\theta_{n+1}  = (-1,k)) = \frac{1}{c(\lambda +c)}$ and 
$\P(\acute{Y}_{n+1} =y \,|\, \acute{Y}_n = y) = \frac{h_0(y)}{c(\lambda +c)}$ yield that
\[
\P(\theta_{n+1}  = (-1,k) \,|\, \acute{Y}_{n+1} =y, \acute{Y}_n = y) = \frac{1}{h_0(y)}\,.
\qedhere
\]
\end{proof}

We should consider $\acute{Y}_n$ instead of  $Y_n$ for practical simulation issues even though the $\tau_n$
are not needed, for instance in order to allow to determine the $h_m$ by binary search.
A possibility for simulating $(\acute{Y}_n)_{n\in\N_0}$ 
is to use \eqref{update-fct-exch} as  $\acute{Y}'_n = \acute{\psi}(\acute{Y}'_{n-1}, \theta'_n)$ for $n\ge1$ with 
$\theta'_n$ i.i.d.\ of  law \eqref{law-marks}, which results in
\begin{equation}\label{back-y}
Y_0 \sim \gamma^{\otimes c}\,,\qquad
\acute{Y}_n = \psi(\acute{Y}_{n+1}, \theta'_{-n})\,,
\quad
n\le -1\,.
\end{equation}
These $\theta'_n$ are only related indirectly to the $\theta_n$ in Theorem~\ref{theta-from-y}.

The DomCFTP method uses Theorem~\ref{theta-from-y}. 
It simulates backwards in time $(\acute{Y}_n)_{n\in\Z_-}$ until the backward stopping time
\begin{equation}
\label{eq-empty-back}
N\triangleq \sup\{n\le 0 : \acute{Y}_n =0\} = \sup\{n\le 0 : Y_n =0\} \le 0
\end{equation}
which will be proved to be finite, a.s., 
Theorem~\ref{thm-rec-pos} yields that $0 \preccurlyeq X_N \preccurlyeq Y_N =0$
and hence that $X_N =0$ (even though $\preccurlyeq$ is only a preorder).
Moreover, $\theta_{n+1}$ can be simulated once $\acute{Y}_n$ has been simulated from $\acute{Y}_{n+1}$ for $n\le-1$.
Then $\theta_{N+1}$, \dots\,, $\theta_0$ and~\eqref{update-fct} allow to simulate
$X_{N+1}$, \dots\,, $X_0$, and $X_0$ is a perfect draw from $\pi$.

This results in the following perfect simulation algorithm, written in mathematical fashion and not in pseudocode.
An actual implementation should follow Section~\ref{sec-sim-sym}.

\begin{algorithm}[H]
\caption{DomCFTP at the empty state}\label{alg-dcftp-empty}
\begin{algorithmic}[1]
\renewcommand\algorithmicdo{}
\algnotext{EndFor}
\State draw  $Y_0$ from $\gamma^{\otimes c}$ and sort into $\acute{Y}_0$
\Comment see Section~\ref{rem-geom}
\For{$n = -1$ down to $N\triangleq \sup\{m\le 0 : \acute{Y}_m =0\}$}\label{a2-theta}
\Comment not executed if $\acute{Y}_0=0$
	\State draw $\acute{Y}_n$ using $\acute{Y}_{n+1}$
	\Comment see Theorem~\ref{theta-from-y} and \eqref{back-y}
	\State draw $\theta_{n+1}$ using $\acute{Y}_{n+1}$ and $\acute{Y}_n$
	\Comment see Theorem~\ref{theta-from-y}
\EndFor
\State set $X_N = 0$
\For{$n=N+1$ up to $0$}\label{a2-perf-sim-iter}
\Comment not executed if $\acute{Y}_0=0$
	\State draw $U_n$ uniformly in $[0,1]$ 
	\Comment for possible random choices
	\State set $X_n = \phi(X_{n-1},\theta_n,U_n)$
	\Comment the $\theta_n$ obtained in Loop~\ref{a2-theta}
\EndFor
\State\Return $X_0$
\end{algorithmic}
\end{algorithm}
 
\begin{theorem}\label{DCFTP-empty}
Let $\lambda < c$, and $\pi$ be the invariant law of the queue length process $X$ of a load balancing network
of Definition~\ref{def-class}. 
Algorithm~\ref{alg-dcftp-empty} uses simulatable draws, terminates after a duration which has 
some exponential moments, and outputs a perfect draw from~$\pi$. 
Specifically, there exists $s>1$ such that $\E(s^{\lvert N\rvert})<\infty$, hence $\E(\lvert N\rvert)<\infty$,
and  $X_0$ has law $\pi$.
If the load balancing policy is exchangeable, Algorithm~\ref{alg-dcftp-empty} with $X$ and \eqref{update-fct} 
replaced by $\acute{X}$ and \eqref{update-fct-exch} in Loop~\ref{a2-perf-sim-iter} 
outputs a perfect draw from the invariant law $\acute{\pi}$ of $\acute{X}$; see Remark~\ref{rem-exch-or-not}.

\end{theorem}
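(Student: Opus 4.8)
The plan is to verify three things, in order: that every draw required by Algorithm~\ref{alg-dcftp-empty} is simulatable and only finitely many are needed; that the returned $X_0$ has law $\pi$; and that $|N|$ has an exponential moment. The first two are essentially assembled from results already in hand, so the substance of the proof is the tail estimate on $|N|$.

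For correctness, I would work with the two-sided stationary version of the augmented Markov chain $(X_n,\acute Y_n,\theta_{n+1})_{n\in\Z}$, which exists by Theorem~\ref{thm-rec-pos} together with the subordination and the update functions of Section~\ref{sec-emb-mc}. Write $\mathcal G$ for the $\sigma$-field generated by $(\acute Y_n,\theta_{n+1})_{n\le 0}$; then the backward stopping time $N$ of \eqref{eq-empty-back} is $\mathcal G$-measurable and a.s.\ finite because $\acute Y$ is positive recurrent (Theorem~\ref{thm-rec-pos}), and Theorem~\ref{thm-rec-pos} gives $X_N\preccurlyeq Y_N=0$, hence $X_N=0$. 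By \eqref{update-fct}, $X_0$ equals $\phi(\cdots\phi(X_N,\theta_{N+1},U_{N+1})\cdots,\theta_0,U_0)$, where $(U_n)_{n\le 0}$ is i.i.d.\ uniform on $[0,1]$ and independent of $\mathcal G$; so conditionally on $\mathcal G$ the value $X_0$ is a deterministic function of the $\mathcal G$-measurable data $(N,\theta_{N+1},\dots,\theta_0)$ and of the independent uniforms $U_{N+1},\dots,U_0$. The algorithm reproduces exactly this: its backward loop samples $(\acute Y_n,\theta_{n+1})_{n\le 0}$ with the correct law (reversibility and the conditional laws of Theorem~\ref{theta-from-y}), it detects $N$, it sets $X_N=0$, and its forward loop applies the same maps $\phi(\cdot,\theta_n,U_n)$ to $X_N$ with fresh independent uniforms. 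Hence $X_0$ has the stationary law, i.e.\ $\pi$. All the draws involved — from $\gamma^{\otimes c}$ (Section~\ref{rem-geom}), from the finite law~\eqref{law-marks}, from the conditional uniform laws of Theorem~\ref{theta-from-y}, and $U_n$ uniform on $[0,1]$ — are simulatable, and since $|N|<\infty$ a.s.\ only finitely many are used, so the algorithm terminates a.s. The exchangeable case is identical with $\acute X$ and \eqref{update-fct-exch} in place of $X$ and \eqref{update-fct}, using that $\preccurlyeq$ restricted to non-decreasing vectors is an order.

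For the exponential moment, the reversibility in Theorem~\ref{theta-from-y} gives $|N|\overset{d}{=}\tau_0:=\inf\{m\ge 0:Y'_m=0\}$, the hitting time of the empty state by the embedded chain $(Y'_m)_{m\ge 0}$ of the UR network started from $\gamma^{\otimes c}$. I would prove a geometric Foster--Lyapunov condition for this chain. Fix $\zeta\in(1,c/\lambda)$ and set $V(y):=\sum_{i=1}^c\zeta^{y^i}\ge c$. Computing the one-step expectation (an arrival goes to a uniform queue with probability $\lambda/((\lambda+c)c)$ each, a potential departure targets a uniform queue with probability $1/(\lambda+c)$ each) gives
\[
PV(y)=V(y)+\frac{1}{\lambda+c}\sum_{i=1}^c\zeta^{y^i}\Bigl[\tfrac{\lambda}{c}(\zeta-1)-\ind_{\{y^i\ge 1\}}(1-\zeta^{-1})\Bigr].
\]
For $y^i\ge 1$ the bracket equals $-(\zeta-1)(\zeta^{-1}-\lambda/c)<0$, while for $y^i=0$ it equals $\tfrac{\lambda}{c}(\zeta-1)$ and $\zeta^{y^i}=1$; bounding the number of empty queues by $c$ and using $\sum_{i:y^i\ge 1}\zeta^{y^i}\ge V(y)-c$ yields $PV\le\rho V+b$ with explicit $b<\infty$ and (taking $\zeta$ close enough to $1$) $\rho\in(0,1)$. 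Thus $PV\le\rho'V$ off a finite set $C=\{V\le R\}\ni 0$ for some $\rho<\rho'<1$, the classical geometric drift condition; by the standard theory of geometrically ergodic chains there is $s>1$ with $\E_y[s^{\tau_0}]\le K\,V(y)$ for every $y$ (reaching the single state $0$ from the finite set $C$ requires only a geometric number of excursions, each of bounded-moment length). Integrating against $\gamma^{\otimes c}$ and using $\E_{\gamma^{\otimes c}}[V]=c\sum_{n\ge 0}\zeta^n\gamma(n)=c\,(1-\lambda/c)/(1-\zeta\lambda/c)<\infty$ — finite precisely because $\zeta\lambda/c<1$ — gives $\E(s^{|N|})=\E_{\gamma^{\otimes c}}[s^{\tau_0}]<\infty$, hence $\E(|N|)<\infty$, so the algorithm terminates a.s.\ with finite expected duration.

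The main obstacle is the choice of Lyapunov function. The naive candidate $z^{\beta_0(y)}$ built on the total number of tasks does not have negative drift when only one server is busy and $\lambda\ge 1$, since a lone server drains slower than the arrival stream; the resolution is that the UR policy splits arrivals so that each individual queue receives rate $\lambda/c<1$, which is exactly what makes the per-queue functional $V(y)=\sum_i\zeta^{y^i}$ contract for $\zeta<c/\lambda$. The remaining ingredients — passing from a geometric drift outside a finite set to an exponential moment for the hitting time of the single atom $0$, and then to a moment under the law $\gamma^{\otimes c}$ — are standard and I would simply invoke them.
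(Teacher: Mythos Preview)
Your argument is correct. The correctness portion mirrors the paper's ``preceding discussion'' closely, just spelled out in more detail via the $\sigma$-field $\mathcal G$ and the independence of the $U_n$.

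The exponential-moment portion, however, takes a genuinely different route. The paper exploits the product structure of the UR network directly: for $c=1$ the embedded $M/M/1$ chain has a classically known exponential moment for its busy period, hence for the hitting time of $0$ from equilibrium via length-biasing; the case $c\ge 2$ then follows by induction, since $(Y_n)$ is a system of $c$ independent such chains and the hitting time of $0$ for the vector is controlled by the maximum of $c$ independent hitting times (the paper cites Lindvall and Thorisson for this step). Your approach instead builds an explicit geometric Foster--Lyapunov function $V(y)=\sum_i\zeta^{y^i}$ and computes the one-step drift directly, then invokes the standard passage from geometric drift outside a finite set to exponential moments of hitting times. Your choice of $V$ is apt precisely because it is additive over the independent coordinates, so in spirit it encodes the same independence the paper uses; what you gain is a self-contained quantitative bound (explicit $\rho$ and $b$) without appealing to $M/M/1$ busy-period lore or an induction on $c$, at the cost of invoking the Meyn--Tweedie-type machinery for the last step. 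Both arguments are short; the paper's is a little more conceptual, yours a little more constructive.
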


\begin{proof}
The proof follows from the preceding discussion except the statements on $N$. Now, $-N = \lvert N\rvert$
has same law as the hitting time of $0$ by $(Y_n)_{n\ge0}$ in equilibrium.
Classically, when $c=1$ the hitting time of $0$ by $(Y_n)_{n\ge0}$ starting at $0$ has some exponential moments,
and the hitting time of $0$ by $(Y_n)_{n\ge0}$ in equilibrium is a sized biased version of it multiplied by an 
independent uniform random variable on $[0,1]$ and thus also has some exponential moments.
The conclusion follows from \cite[Sect.~II.1.4]{Lindvall1992}, or \cite[Sect.~10.7]{Thorisson2000}
adapted to discrete time, and induction over $c\ge2$.
\end{proof}

\begin{remark}\label{rem-sim-1-paral}
This algorithm allows perfect simulation of the invariant laws of coupled networks with varied load balancing
policies, for instance in order to compare their performances. 
The backward simulation of $(\acute{Y}_0,\dots,\acute{Y}_N)$ and of $(\theta_0,\dots,\theta_{N+1})$ is performed once. 
Then the queue length processes of these networks are simulated forward in time from time $N$ and state~$0$ 
up to time~$0$, possibly in parallel fashion.
\end{remark}

\subsection{DomCFTP with back-off and sandwiching}
\label{sec-sandwich}

In practice we expect $c$ to be at least moderately large and $\lambda$ to be close to $c$. Then 
the time $\lvert N \rvert$ that $(\acute{Y}_n)_{n \in \Z_-}$ takes to reach the empty state $0$ 
and the duration of the simulation 
procedure in Theorem~\ref{DCFTP-empty} are likely to be quite long.
It takes much shorter times for $(\acute{Y}_n)_{n \in \Z_-}$ to reach states in which some queues but not all are empty,
which are those such that a queue in $(X_n)_{n \in \Z_-}$ may finish serving a task while $(\acute{Y}_n)_{n \in \Z_-}$ 
does not change. A corresponding back-off method may achieve a considerable speed-up.

An important issue is that the preorder~$\preccurlyeq$  can be used for these purposes only if the update function $\phi$ preserves the quotient order for the indifference relation $\sim$ in Theorem~\ref{thm-order}. 
This is very close to requiring that the load balancing policy be exchangeable; an uninteresting generalization 
is for instance a policy such as JSQ or JSQ(d) breaking ties by ascending queue index. 
For the sake of clarity, we henceforth focus on exchangeable load balancing networks, see Section~\ref{sec-exch}.

We are going to provide a DomCFTP method for the perfect simulation of the invariant law $\acute{\pi}$ of the 
instantaneous non-decreasing ordering $\acute{X}$ of $X$ using the Markov chain $(\acute{X}_n)_{n \in \Z_-}$. 
It requires to choose a back-off strategy
\begin{equation}
\label{backoff}
-\infty < \cdots <M_k<\dots < M_1 <0
\end{equation}
constituted of stopping times of  the dominating Markov chain $(\acute{Y}_n)_{n\in\Z_-}$ simulated backward 
in time. A classic choice is binary (or doubling) back-off,  in which 
\[
M_k = 2 M_{k-1} = 2^{k-1}  M_1\,, \quad k\ge2\,.
\]
It has some empirical and theoretical backing, see~\cite[Sect.~5.2]{Propp1996}. Usually $M_1$ is deterministic 
but could well be random, such as the first backward time a queue empties for the dominating Markov chain.

Recall that $Y_0$ can be drawn from $\gamma^{\otimes c}$, see Section~\ref{rem-geom}, and rearranged into 
$\acute{Y}_0$. If $Y_0 =0$ then coalescence has been achieved at time $0 \triangleq M_0$ and the 
procedure stops, else we recursively back-off as follows.  Recall Theorem~\ref{theta-from-y}.

For $k\ge 1$, if coalescence has not been achieved yet after simulating forward from time $M_{k-1}$, after which
\[
\acute{Y}_0,\quad \acute{Y}_{-1}, \theta_0,\, \dots\,, \acute{Y}_{M_{k-1}}, \theta_{M_{k-1}+1}, \quad U_{M_{k-1}+1}, \dots\,, U_0
\]
have been determined, then $M_k$ is determined by simulating backward in time 
\[
\acute{Y}_{M_{k-1}-1}, \theta_{M_{k-1}},\, \dots\,,  \acute{Y}_{M_k}, \theta_{M_k+1}\,.
\]
Then we construct forward in times, drawing $U_{M_k+1}, \dots\,, U_{M_{k-1}}$
and reusing $U_{M_{k-1}+1}, \dots\,, U_0$,
\begin{align*}
\check{X}_{M_k}=0\,, 
&&\check{X}_{{M_k}+1} = \acute{\phi}(\check{X}_{M_k}, \theta_{{M_k}+1},U_{{M_k}+1})\,, \,\dots\,,\,
\check{X}_0= \acute{\phi}(\check{X}_1, \theta_0,U_0)\,, 
\\
\hat{X}_{M_k}=\acute{Y}_{M_k}\,, 
&&\hat{X}_{{M_k}+1} = \acute{\phi}(\hat{X}_{M_k}, \theta_{{M_k}+1},U_{{M_k}+1})\,, \,\dots\,, \,
\hat{X}_0= \acute{\phi}(\hat{X}_1, \theta_0,U_0)\,.
\end{align*}
The (unknown) coupled $(\acute{X}_n)_{n\in\Z_-}$ in equilibrium satisfies 
\[
\check{X}_n \preccurlyeq \acute{X}_n \preccurlyeq \hat{X}_n\,, \quad n = M_k,\dots, 0\,.
\]
Since the update mechanism preserves the preorder $\preccurlyeq$ 
and that this constitutes an order on non-decreasing $c$-tuples, 
if $\check{X}_n = \hat{X}_n$ for some $n$ in $\{M_k,\dots, 0\}$
then  $\check{X}_m = \hat{X}_m$ for all $n\le m \le 0$, which
constitutes the event of coalescence, and in particular  
\[
\check{X}_0 = \acute{X}_0 = \hat{X}_0
\]
constitutes a perfect draw from the invariant law $\acute{\pi}$ of $\acute{X}$.

This back-off procedure must be repeated for $k\ge1$ until coalescence is achieved.
This happens at least as soon as the time to which the backward simulation is continued is
lesser than or equal to $N$ defined in \eqref{eq-empty-back}, 
which ensures some exponential moments for the duration,
see Theorem~\ref{DCFTP-empty}.

Note that in the above $U_{M_{k-1}+1}, \dots\,, U_0$ do not affect the $(\acute{Y}_n, \theta_{n+1})$, and can be redrawn anew since their reuse is not required for the coupling to be preserved.

This results in the following perfect simulation algorithm, written in mathematical fashion and not in pseudocode.
An actual implementation should follow Section~\ref{sec-sim-sym}.
The algorithm requires to choose a suitable back-off strategy \eqref{backoff}.

\begin{algorithm}[H]
\caption{DomCFTP with back-off and sandwiching}\label{alg-dcftp-sandw}
\begin{algorithmic}[1]
\renewcommand\algorithmicdo{}
\algnotext{EndFor}
\State draw  $Y_0$ from $\gamma^{\otimes c}$ and sort into $\acute{Y}_0$ 
\Comment see Section~\ref{rem-geom}
\For{$k=1$ to $K = \inf\{k\ge1 : \check{X}^{(k)}_0 = \hat{X}^{(k)}_0\}$}
	\State determine $M_k$
	\For{$n= M_{k-1}-1$ down to $M_k$}\label{a3-theta}
		\State draw $\acute{Y}_n$ using $\acute{Y}_{n+1}$
		\Comment see Theorem~\ref{theta-from-y} and \eqref{back-y}
		\State draw $\theta_{n+1}$ using $\acute{Y}_{n+1}$ and $\acute{Y}_n$
		\Comment see Theorem~\ref{theta-from-y}
	\EndFor
	\State set $\check{X}^{(k)}_{M_k} = 0$ and $\hat{X}^{(k)}_{M_k} = \acute{Y}_{M_k}$
	\For{$n=M_k+1$ up to $n=0$}\label{a3-up-low}
		\State draw $U_n$ uniformly in $[0,1]$ 
		\Comment reuse or redraw $U_n$ for $n\ge M_{k-1}+1$
		\State set $\check{X}^{(k)}_n =  \acute{\phi}(\check{X}^{(k)}_{n-1},\theta_n,U_n)$
		\Comment the $\theta_n$ obtained in Loop~\ref{a3-theta}
		\State set $\hat{X}^{(k)}_n =  \acute{\phi}(\hat{X}^{(k)}_{n-1},\theta_n,U_n)$ 
	\EndFor
\EndFor
\State\Return $\hat{X}^{(K)}_0$
\end{algorithmic}
\end{algorithm}

\begin{theorem}\label{DCFTP-up-low}
Let $\lambda < c$, and $\pi$ be the invariant law of the queue length process $X$ of a load balancing network
of Definition~\ref{def-class}. 
Algorithm~\ref{alg-dcftp-sandw} uses simulatable draws, terminates after a duration which has 
some exponential moments, and outputs a perfect draw from~$\pi$. 
Specifically, there exists $s>1$ such that $\E(s^K)<\infty$, thus $\E(K)<\infty$,
and $\hat{X}^{(K)}_0$ has law $\acute{\pi}$.
\end{theorem}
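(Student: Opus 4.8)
The plan is to check, in turn, the three assertions of the statement — that every draw the algorithm performs is simulatable, that the returned $\hat X^{(K)}_0$ has law $\acute\pi$, and that $K$ has some exponential moment — leaning on the reduced representation of Section~\ref{sec-exch}, on which $\preccurlyeq$ induces a genuine partial order (Theorem~\ref{thm-order}), and on Theorems~\ref{theta-from-y} and~\ref{DCFTP-empty}.

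Simulatability is almost immediate from the preceding discussion: $Y_0\sim\gamma^{\otimes c}$ and its sorting are treated in Section~\ref{rem-geom}; the backward recursion producing $\acute Y_n$ from $\acute Y_{n+1}$ and then $\theta_{n+1}$ from $(\acute Y_{n+1},\acute Y_n)$ is exactly Theorem~\ref{theta-from-y} together with~\eqref{back-y}; the forward updates use $\acute\phi$ from~\eqref{update-fct-exch} and fresh uniforms. One only has to observe that each back-off level $M_k$ is a stopping time of $(\acute Y_n)_{n\in\Z_-}$ — e.g.\ $M_k=2^{k-1}M_1$ with $M_1\le-1$ fixed in the binary back-off — so that the decision to stop the procedure or to extend the backward simulation is measurable with respect to what has already been produced.

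For correctness I would make the sandwiching argument preceding the algorithm rigorous using Theorem~\ref{theta-from-y}. On the accepting event, with $k=K$, Theorem~\ref{theta-from-y} guarantees that the simulated $(\acute Y_n,\theta_{n+1})_{M_K\le n\le0}$ is distributed as the corresponding stretch of the two-sided stationary chain; since the uniforms $(U_n)$ are independent of it, one may couple in a two-sided stationary Markov chain $(\acute X_n)_{n\in\Z}$ with $\acute X_n=\acute\phi(\acute X_{n-1},\theta_n,U_n)$ on $\{M_K+1,\dots,0\}$ and with $\acute X_0\sim\acute\pi$. Theorem~\ref{thm-rec-pos}, read in the reduced representation, gives $0\preccurlyeq\acute X_{M_K}\preccurlyeq\acute Y_{M_K}$, that is $\check X^{(K)}_{M_K}\preccurlyeq\acute X_{M_K}\preccurlyeq\hat X^{(K)}_{M_K}$; and since each map $\acute\phi(\cdot,\theta,u)$ preserves $\preccurlyeq$ — this is where exchangeability is used, see Section~\ref{sec-sandwich} — the sandwich propagates forward to time $0$. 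On the accepting event its two ends coincide, hence $\acute X_0=\hat X^{(K)}_0$, which therefore has law $\acute\pi$; a perfect draw from $\pi$ is obtained by applying a further independent uniform permutation of the coordinates, see Remark~\ref{rem-exch-or-not}. That coalescence, once attained at a time $n\le0$, persists up to time $0$ is immediate because $\preccurlyeq$ is an order on non-decreasing $c$-tuples.

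The point I would dwell on is the finiteness of $K$ with an exponential moment. The key remark is that coalescence is certain as soon as the backward simulation reaches a time $M_k\le N$, where $N=\sup\{n\le0:\acute Y_n=0\}$ as in~\eqref{eq-empty-back}: running $\check X^{(k)}$ and $\hat X^{(k)}$ forward from $M_k$, both remain $\preccurlyeq\acute Y$ by Theorem~\ref{thm-coupling}, hence both equal $\acute Y_N=0$ at time $N\le0$, so coalescence has occurred by time $0$. Thus $K\le\inf\{k\ge1:M_k\le N\}$, a bound that does not involve the uniforms $(U_n)$ and so is unaffected by the fact that those may be redrawn at each level. With binary back-off and $|M_1|\ge1$ this yields $K\le2+\log_2(1+|N|)$, whence $s^K\le s^{2}\,(1+|N|)^{\log_2 s}$ for every $s>1$. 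Finally $|N|$ has the same law as the equilibrium hitting time of $0$ by $(\acute Y_n)_{n\ge0}=(Y_n)_{n\ge0}$, which has a finite exponential moment by Theorem~\ref{DCFTP-empty}; hence all polynomial moments of $|N|$ are finite, $\E(s^K)<\infty$ for every $s>1$, and in particular $\E(K)<\infty$ and the algorithm terminates a.s. I expect this last paragraph — disentangling that the $O(\log|N|)$ bound on $K$ is robust to the redrawing of the $(U_n)$, and transferring the exponential moment of $|N|$ from Theorem~\ref{DCFTP-empty} — to be the only delicate part; the rest is bookkeeping on top of Theorems~\ref{thm-order}, \ref{thm-coupling}, \ref{thm-rec-pos} and~\ref{theta-from-y}.
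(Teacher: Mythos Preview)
Your proposal is correct and follows essentially the same route as the paper, whose proof simply defers to the discussion preceding the algorithm; you have spelled that discussion out carefully, including the sandwiching via Theorems~\ref{thm-coupling} and~\ref{thm-rec-pos} on the reduced representation and the key observation that coalescence is certain once $M_k\le N$. The one point worth noting is that you handle the exponential moment of $K$ only for binary back-off, obtaining the sharper bound $K=O(\log|N|)$; for an arbitrary strategy~\eqref{backoff} the strict decrease of integers forces $M_k\le -k$, hence $K\le |N|$, and the exponential moment of $|N|$ from Theorem~\ref{DCFTP-empty} still transfers directly to $K$, so the general case requires no new idea.
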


\begin{proof}
The proof follows from the preceding discussion.
\end{proof}

\begin{remark}
This algorithm can be used for the task described in Remark~\ref{rem-sim-1-paral}. The backward simulation 
of $(\acute{Y}_0,\dots,\acute{Y}_{M_k})$ and of $(\theta_0,\dots,\theta_{M_k+1})$ should be continued back in the past 
for $k\ge1$ until coalescence has been achieved for all the queue length processes of the networks under consideration. 
In the case of parallel computing, the backward simulation of $(\acute{Y}_n)_{n\in\Z_-}$ may be continued 
while forward lower and upper bound simulation at a $M_k$ are undertaken.
\end{remark}

\subsection{Assessment of the DomCFTP algorithms}\label{sec-Con-Ken-assess}

In~\cite[Sect.~6]{Connor-Kendall2015}, the authors assess and compare the two algorithms that they have 
developed for $M/G/c$. 

They do so mainly in the Markovian case $M/M/c$. 
The invariant law is well-known -- its normalizing constant is obtained from the well-tabulated Erlang $C$ formula -- and used for comparisons with the simulated draws.
The dominating Markov chain is given from $c$ i.i.d.\ $M/M/1$ queues as in the present paper,
and can be readily perfectly simulated without using the Pollaczek-Khintchine formula.

Simulations indicate that the more sophisticated back-off sandwiching algorithm takes considerably less 
time than the simpler emptying algorithm. 
The authors indicate that the discussion is far from complete and recall 
that the computational demands of the sandwiching algorithm are greater for similar durations. Some theoretic and heuristic arguments on the durations are also provided.

\section{Conclusion}

The main scope of this paper is to provide efficient DomCFTP methods in the truly multi-dimensional 
setting of a wide class of Markovian load balancing queueing networks. 
The policies we consider are quite general and natural, and include 
classic polices for parallel servers as well as policies implementing instantaneous
work stealing at service completions. These policies may respect topological constraints such as a directed graph
structure, which breaks the exchangeability assumption under which most asymptotic performance evaluation 
is achieved.

Such perfect simulation methods are valuable 
for Monte Carlo estimation of quantities of interest in equilibrium,  most notably for performance evaluation.
This is achieved on this infinite state space using an adequate coupling
providing a dominating Markov chain in an original preorder. It corresponds to an order up to permutations 
of the coordinates which is strictly weaker than the classic product order. 

The paper by Connor and Kendall~\cite{Connor-Kendall2015} has been an inspiration, and we are glad to 
have provided an answer to  Question~3 in its Section~7.
There are many remaining questions left unanswered, and we list a few.
\begin{enumerate}
\item
Provide mathematical evaluation of the benefits of the back-off and sandwiching algorithm over the emptying algorithm 
(\cite[Sect.~7 Question~2]{Connor-Kendall2015}).
\item
Implement niceties about DomCFTP such as recycling, interruptible algorithms, read-once, 
see~\cite[Chap.~5]{Huber2016},
simultaneous implementation for a suitable range of values of $c$ called ``omnithermal DomCFTP''
in \cite[Sect.~7 Question~4]{Connor-Kendall2015}), etc. 
We have noted that we may reuse or redraw the $U_n$ at will.
\item
Extend results to non-Markovian queueing networks with general service distributions or
general interarrival distributions or both, see~\cite[Sect.~9]{vanderBoor2022}, \cite{Blanchet2019}.
\item

Tell interesting things about asymmetric queueing networks, see~\cite[Sect.~9]{vanderBoor2022}, 
beyond the fact that the first two perfect simulation methods in this paper apply to the networks of Definition~\ref{def-class}.
\item
Find other wide natural classes of networks, couplings, and orders for which the ideas in this work
can be adapted.
\end{enumerate}

\backmatter

\bmhead{Acknowledgments}
The author thanks Stephen Connor for a private communication about the mean run time of 
the DomCFTP algorithms in~\cite{Connor-Kendall2015}. This paper was an inspiration for us.

%%%%%%


\begin{thebibliography}{36}
% BibTex style file: bmc-mathphys.bst (version 2.1), 2014-07-24
\ifx \bisbn   \undefined \def \bisbn  #1{ISBN #1}\fi
\ifx \binits  \undefined \def \binits#1{#1}\fi
\ifx \bauthor  \undefined \def \bauthor#1{#1}\fi
\ifx \batitle  \undefined \def \batitle#1{#1}\fi
\ifx \bjtitle  \undefined \def \bjtitle#1{#1}\fi
\ifx \bvolume  \undefined \def \bvolume#1{\textbf{#1}}\fi
\ifx \byear  \undefined \def \byear#1{#1}\fi
\ifx \bissue  \undefined \def \bissue#1{#1}\fi
\ifx \bfpage  \undefined \def \bfpage#1{#1}\fi
\ifx \blpage  \undefined \def \blpage #1{#1}\fi
\ifx \burl  \undefined \def \burl#1{\textsf{#1}}\fi
\ifx \doiurl  \undefined \def \doiurl#1{\url{https://doi.org/#1}}\fi
\ifx \betal  \undefined \def \betal{\textit{et al.}}\fi
\ifx \binstitute  \undefined \def \binstitute#1{#1}\fi
\ifx \binstitutionaled  \undefined \def \binstitutionaled#1{#1}\fi
\ifx \bctitle  \undefined \def \bctitle#1{#1}\fi
\ifx \beditor  \undefined \def \beditor#1{#1}\fi
\ifx \bpublisher  \undefined \def \bpublisher#1{#1}\fi
\ifx \bbtitle  \undefined \def \bbtitle#1{#1}\fi
\ifx \bedition  \undefined \def \bedition#1{#1}\fi
\ifx \bseriesno  \undefined \def \bseriesno#1{#1}\fi
\ifx \blocation  \undefined \def \blocation#1{#1}\fi
\ifx \bsertitle  \undefined \def \bsertitle#1{#1}\fi
\ifx \bsnm \undefined \def \bsnm#1{#1}\fi
\ifx \bsuffix \undefined \def \bsuffix#1{#1}\fi
\ifx \bparticle \undefined \def \bparticle#1{#1}\fi
\ifx \barticle \undefined \def \barticle#1{#1}\fi
\bibcommenthead
\ifx \bconfdate \undefined \def \bconfdate #1{#1}\fi
\ifx \botherref \undefined \def \botherref #1{#1}\fi
\ifx \url \undefined \def \url#1{\textsf{#1}}\fi
\ifx \bchapter \undefined \def \bchapter#1{#1}\fi
\ifx \bbook \undefined \def \bbook#1{#1}\fi
\ifx \bcomment \undefined \def \bcomment#1{#1}\fi
\ifx \oauthor \undefined \def \oauthor#1{#1}\fi
\ifx \citeauthoryear \undefined \def \citeauthoryear#1{#1}\fi
\ifx \endbibitem  \undefined \def \endbibitem {}\fi
\ifx \bconflocation  \undefined \def \bconflocation#1{#1}\fi
\ifx \arxivurl  \undefined \def \arxivurl#1{\textsf{#1}}\fi
\csname PreBibitemsHook\endcsname

%%% 1
\bibitem[\protect\citeauthoryear{Gupta and Walton}{2019}]{Gupta-Walton2019}
\begin{barticle}
\bauthor{\bsnm{Gupta}, \binits{V.}},
\bauthor{\bsnm{Walton}, \binits{N.}}:
\batitle{Load balancing in the nondegenerate slowdown regime}.
\bjtitle{Oper. Res.}
\bvolume{67}(\bissue{1}),
\bfpage{281}--\blpage{294}
(\byear{2019})
\doiurl{10.1287/opre.2018.1768}
\end{barticle}
\endbibitem

%%% 2
\bibitem[\protect\citeauthoryear{van~der Boor et~al.}{2022}]{vanderBoor2022}
\begin{barticle}
\bauthor{\bsnm{Boor}, \binits{M.}},
\bauthor{\bsnm{Borst}, \binits{S.C.}},
\bauthor{\bsnm{Leeuwaarden}, \binits{J.S.H.}},
\bauthor{\bsnm{Mukherjee}, \binits{D.}}:
\batitle{Scalable load balancing in networked systems: a survey of recent
  advances}.
\bjtitle{SIAM Rev.}
\bvolume{64}(\bissue{3}),
\bfpage{554}--\blpage{622}
(\byear{2022})
\doiurl{10.1137/20M1323746}
\end{barticle}
\endbibitem

%%% 3
\bibitem[\protect\citeauthoryear{Elsayed and
  Bastani}{1985}]{Elsayed-Bastani1985}
\begin{barticle}
\bauthor{\bsnm{Elsayed}, \binits{E.A.}},
\bauthor{\bsnm{Bastani}, \binits{A.}}:
\batitle{General solutions of the jockeying problem}.
\bjtitle{Eur. J. Oper. Res.}
\bvolume{22},
\bfpage{387}--\blpage{396}
(\byear{1985})
\doiurl{10.1016/0377-2217(85)90258-9}
\end{barticle}
\endbibitem

%%% 4
\bibitem[\protect\citeauthoryear{Zhao and Grassmann}{1995}]{Zhao-Grassmann1995}
\begin{barticle}
\bauthor{\bsnm{Zhao}, \binits{Y.}},
\bauthor{\bsnm{Grassmann}, \binits{W.K.}}:
\batitle{Queueing analysis of a jockeying model}.
\bjtitle{Oper. Res.}
\bvolume{43}(\bissue{3}),
\bfpage{520}--\blpage{529}
(\byear{1995})
\doiurl{10.1287/opre.43.3.520}
\end{barticle}
\endbibitem

%%% 5
\bibitem[\protect\citeauthoryear{Mitzenmacher}{2001}]{Mitzenmacher2001}
\begin{barticle}
\bauthor{\bsnm{Mitzenmacher}, \binits{M.}}:
\batitle{Analyses of load stealing models based on families of differential
  equations}.
\bjtitle{Theory Comput. Syst.}
\bvolume{34}(\bissue{1}),
\bfpage{77}--\blpage{98}
(\byear{2001})
\doiurl{10.1007/s002240010010}
\end{barticle}
\endbibitem

%%% 6
\bibitem[\protect\citeauthoryear{Tchiboukdjian
  et~al.}{2010}]{Tchiboukdjian-etal2010}
\begin{bchapter}
\bauthor{\bsnm{Tchiboukdjian}, \binits{M.}},
\bauthor{\bsnm{Gast}, \binits{N.}},
\bauthor{\bsnm{Trystram}, \binits{D.}},
\bauthor{\bsnm{Roch}, \binits{J.-L.}},
\bauthor{\bsnm{Bernard}, \binits{J.}}:
\bctitle{A tighter analysis of work stealing}.
In: \bbtitle{Algorithms and Computation. 21st International Symposium, ISAAC
  2010, Jeju, Korea, December 15--17, 2010. Proceedings, Part II},
pp. \bfpage{291}--\blpage{302}.
\bpublisher{Springer},
\blocation{Berlin}
(\byear{2010}).
\doiurl{10.1007/978-3-642-17514-5_25} .
\burl{infoscience.epfl.ch/record/183602}
\end{bchapter}
\endbibitem

%%% 7
\bibitem[\protect\citeauthoryear{Suksompong et~al.}{2016}]{Suksompong-etal2016}
\begin{barticle}
\bauthor{\bsnm{Suksompong}, \binits{W.}},
\bauthor{\bsnm{Leiserson}, \binits{C.E.}},
\bauthor{\bsnm{Schardl}, \binits{T.B.}}:
\batitle{On the efficiency of localized work stealing}.
\bjtitle{Inf. Process. Lett.}
\bvolume{116}(\bissue{2}),
\bfpage{100}--\blpage{106}
(\byear{2016})
\doiurl{10.1016/j.ipl.2015.10.002}
\end{barticle}
\endbibitem

%%% 8
\bibitem[\protect\citeauthoryear{Kielanski and
  Van~Houdt}{2021}]{Kielanski-VanHoudt2021}
\begin{bchapter}
\bauthor{\bsnm{Kielanski}, \binits{G.}},
\bauthor{\bsnm{Van~Houdt}, \binits{B.}}:
\bctitle{Performance analysis of work stealing strategies in large scale
  multi-threaded computing}.
In: \bbtitle{Quantitative Evaluation of Systems. 18th International Conference,
  QEST 2021, Paris, France, August 23--27, 2021. Proceedings},
pp. \bfpage{329}--\blpage{348}.
\bpublisher{Springer},
\blocation{Cham}
(\byear{2021}).
\doiurl{10.1007/978-3-030-85172-9_18}
\end{bchapter}
\endbibitem

%%% 9
\bibitem[\protect\citeauthoryear{Turner}{1998}]{Turner1998}
\begin{barticle}
\bauthor{\bsnm{Turner}, \binits{S.R.E.}}:
\batitle{The effect of increasing routing choice on resource pooling}.
\bjtitle{Probability in the Engineering and Informational Sciences}
\bvolume{12}(\bissue{1}),
\bfpage{109}--\blpage{124}
(\byear{1998})
\doiurl{10.1017/S0269964800005088}
\end{barticle}
\endbibitem

%%% 10
\bibitem[\protect\citeauthoryear{Budhiraja et~al.}{2019}]{Budhiraja2019}
\begin{barticle}
\bauthor{\bsnm{Budhiraja}, \binits{A.}},
\bauthor{\bsnm{Mukherjee}, \binits{D.}},
\bauthor{\bsnm{Wu}, \binits{R.}}:
\batitle{Supermarket model on graphs}.
\bjtitle{The Annals of Applied Probability}
\bvolume{29}(\bissue{3}),
\bfpage{1740}--\blpage{1777}
(\byear{2019})
\doiurl{10.1214/18-AAP1437}
\end{barticle}
\endbibitem

%%% 11
\bibitem[\protect\citeauthoryear{Loynes}{1962}]{Loynes1962}
\begin{barticle}
\bauthor{\bsnm{Loynes}, \binits{R.M.}}:
\batitle{The stability of a queue with nonindependent inter-arrival and service
  times}.
\bjtitle{Proc. Camb. Philos. Soc.}
\bvolume{58},
\bfpage{497}--\blpage{520}
(\byear{1962})
\end{barticle}
\endbibitem

%%% 12
\bibitem[\protect\citeauthoryear{Propp and Wilson}{1996}]{Propp1996}
\begin{barticle}
\bauthor{\bsnm{Propp}, \binits{J.G.}},
\bauthor{\bsnm{Wilson}, \binits{D.B.}}:
\batitle{Exact sampling with coupled {M}arkov chains and applications to
  statistical mechanics}.
\bjtitle{Random Structures and Algorithms}
\bvolume{9}(\bissue{1-2}),
\bfpage{223}--\blpage{252}
(\byear{1996})
\doiurl{10.1002/(sici)1098-2418(199608/09)9:1/2<223::aid-rsa14>3.0.co;2-o}
\end{barticle}
\endbibitem

%%% 13
\bibitem[\protect\citeauthoryear{Connor and Kendall}{2015}]{Connor-Kendall2015}
\begin{barticle}
\bauthor{\bsnm{Connor}, \binits{S.B.}},
\bauthor{\bsnm{Kendall}, \binits{W.S.}}:
\batitle{Perfect simulation of {{\(\mathrm{M}/\mathrm{G}/c\)}} queues}.
\bjtitle{Adv. Appl. Probab.}
\bvolume{47}(\bissue{4}),
\bfpage{1039}--\blpage{1063}
(\byear{2015})
\doiurl{10.1239/aap/1449859799}
\end{barticle}
\endbibitem

%%% 14
\bibitem[\protect\citeauthoryear{Kendall}{1998}]{Kendall1998}
\begin{bchapter}
\bauthor{\bsnm{Kendall}, \binits{W.S.}}:
\bctitle{Perfect simulation for the area-interaction point process}.
In: \bbtitle{Probability Towards 2000},
pp. \bfpage{218}--\blpage{234}.
\bpublisher{Springer},
\blocation{New-York}
(\byear{1998})
\end{bchapter}
\endbibitem

%%% 15
\bibitem[\protect\citeauthoryear{Kendall and
  M{\o}ller}{2000}]{Kendall-Moller2000}
\begin{barticle}
\bauthor{\bsnm{Kendall}, \binits{W.S.}},
\bauthor{\bsnm{M{\o}ller}, \binits{J.}}:
\batitle{Perfect simulation using dominating processes on ordered spaces, with
  application to locally stable point processes}.
\bjtitle{Adv. Appl. Probab.}
\bvolume{32}(\bissue{3}),
\bfpage{844}--\blpage{865}
(\byear{2000})
\doiurl{10.1239/aap/1013540247}
\end{barticle}
\endbibitem

%%% 16
\bibitem[\protect\citeauthoryear{Th{\"o}nnes}{2000}]{Thonnes2000}
\begin{bchapter}
\bauthor{\bsnm{Th{\"o}nnes}, \binits{E.}}:
\bctitle{A primer in perfect simulation}.
In: \beditor{\bsnm{Mecke}, \binits{K.R.}},
\beditor{\bsnm{Stoyan}, \binits{D.}} (eds.)
\bbtitle{Statistical Physics and Spatial Statistics},
pp. \bfpage{349}--\blpage{378}.
\bpublisher{Springer},
\blocation{Berlin}
(\byear{2000})
\end{bchapter}
\endbibitem

%%% 17
\bibitem[\protect\citeauthoryear{Kendall}{2005}]{Kendall2005}
\begin{bchapter}
\bauthor{\bsnm{Kendall}, \binits{W.S.}}:
\bctitle{Notes on perfect simulation}.
In: \bbtitle{Markov Chain Monte Carlo},
pp. \bfpage{93}--\blpage{146}.
\bpublisher{World Scientific},
\blocation{Hackensack NJ}
(\byear{2005})
\end{bchapter}
\endbibitem

%%% 18
\bibitem[\protect\citeauthoryear{Huber}{2016}]{Huber2016}
\begin{bbook}
\bauthor{\bsnm{Huber}, \binits{M.L.}}:
\bbtitle{Perfect Simulation}.
\bsertitle{Monogr. Stat. Appl. Probab.},
vol. \bseriesno{148}.
\bpublisher{CRC Press},
\blocation{Boca Raton, FL}
(\byear{2016}).
\burl{www.crcnetbase.com/isbn/9781482232455}
\end{bbook}
\endbibitem

%%% 19
\bibitem[\protect\citeauthoryear{Sigman}{2012}]{Sigman2012}
\begin{barticle}
\bauthor{\bsnm{Sigman}, \binits{K.}}:
\batitle{Exact simulation of the stationary distribution of the {FIFO}
  {M}/{G}/c queue: the general case for {$\rho<c$}}.
\bjtitle{Queueing Syst.}
\bvolume{70}(\bissue{1}),
\bfpage{37}--\blpage{43}
(\byear{2012})
\doiurl{10.1007/s11134-011-9266-6}
\end{barticle}
\endbibitem

%%% 20
\bibitem[\protect\citeauthoryear{Winston}{1977}]{Winston1977}
\begin{barticle}
\bauthor{\bsnm{Winston}, \binits{W.}}:
\batitle{Optimality of the shortest line discipline}.
\bjtitle{J. Appl. Probability}
\bvolume{14}(\bissue{1}),
\bfpage{181}--\blpage{189}
(\byear{1977})
\doiurl{10.1017/s0021900200104772}
\end{barticle}
\endbibitem

%%% 21
\bibitem[\protect\citeauthoryear{Liu et~al.}{1995}]{Liu1995}
\begin{barticle}
\bauthor{\bsnm{Liu}, \binits{Z.}},
\bauthor{\bsnm{Nain}, \binits{P.}},
\bauthor{\bsnm{Towsley}, \binits{D.}}:
\batitle{Sample path methods in the control of queues}.
\bjtitle{Queueing Systems Theory Appl.}
\bvolume{21}(\bissue{3-4}),
\bfpage{293}--\blpage{335}
(\byear{1995})
\doiurl{10.1007/BF01149166}
\end{barticle}
\endbibitem

%%% 22
\bibitem[\protect\citeauthoryear{Fayolle and
  Iasnogorodski}{1979}]{Fayolle-Iasnogorodski1979}
\begin{barticle}
\bauthor{\bsnm{Fayolle}, \binits{G.}},
\bauthor{\bsnm{Iasnogorodski}, \binits{R.}}:
\batitle{Two coupled processors: the reduction to a {R}iemann-{H}ilbert
  problem}.
\bjtitle{Z. Wahrsch. Verw. Gebiete}
\bvolume{47}(\bissue{3}),
\bfpage{325}--\blpage{351}
(\byear{1979})
\doiurl{10.1007/BF00535168}
\end{barticle}
\endbibitem

%%% 23
\bibitem[\protect\citeauthoryear{Kurkova and Suhov}{2003}]{Kurkova-Suhov2003}
\begin{barticle}
\bauthor{\bsnm{Kurkova}, \binits{I.A.}},
\bauthor{\bsnm{Suhov}, \binits{Y.M.}}:
\batitle{Malyshev's theory and {JS}-queues. {A}symptotics of stationary
  probabilities}.
\bjtitle{Ann. Appl. Probab.}
\bvolume{13}(\bissue{4}),
\bfpage{1313}--\blpage{1354}
(\byear{2003})
\doiurl{10.1214/aoap/1069786501}
\end{barticle}
\endbibitem

%%% 24
\bibitem[\protect\citeauthoryear{Lu et~al.}{2011}]{Lu2011}
\begin{barticle}
\bauthor{\bsnm{Lu}, \binits{Y.}},
\bauthor{\bsnm{Xie}, \binits{Q.}},
\bauthor{\bsnm{Kliot}, \binits{G.}},
\bauthor{\bsnm{Geller}, \binits{A.}},
\bauthor{\bsnm{Larus}, \binits{J.}},
\bauthor{\bsnm{Greenberg}, \binits{A.}}:
\batitle{Join-idle-queue: A novel load balancing algorithm for dynamically
  scalable web services}.
\bjtitle{Performance Evaluation}
\bvolume{68}(\bissue{11}),
\bfpage{1056}--\blpage{1071}
(\byear{2011})
\end{barticle}
\endbibitem

%%% 25
\bibitem[\protect\citeauthoryear{Vvedenskaya et~al.}{1996}]{Vvedenskaya1996}
\begin{barticle}
\bauthor{\bsnm{Vvedenskaya}, \binits{N.D.}},
\bauthor{\bsnm{Dobrushin}, \binits{R.L.}},
\bauthor{\bsnm{Karpelevich}, \binits{F.I.}}:
\batitle{A queueing system with a choice of the shorter of two queues: an
  asymptotic approach}.
\bjtitle{Problems of Information Transmission}
\bvolume{32}(\bissue{1}),
\bfpage{15}--\blpage{27}
(\byear{1996})
\end{barticle}
\endbibitem

%%% 26
\bibitem[\protect\citeauthoryear{Mitzenmacher}{1996}]{Mitzenmacher1996}
\begin{botherref}
\oauthor{\bsnm{Mitzenmacher}, \binits{M.D.}}:
The power of two choices in randomized load balancing.
PhD thesis,
University of California, Berkeley
(1996)
\end{botherref}
\endbibitem

%%% 27
\bibitem[\protect\citeauthoryear{Graham}{2000}]{Graham2000}
\begin{barticle}
\bauthor{\bsnm{Graham}, \binits{C.}}:
\batitle{Chaoticity on path space for a queueing network with selection of the
  shortest queue among several}.
\bjtitle{J. Appl. Probab.}
\bvolume{37}(\bissue{1}),
\bfpage{198}--\blpage{211}
(\byear{2000})
\doiurl{10.1017/s0021900200015345}
\end{barticle}
\endbibitem

%%% 28
\bibitem[\protect\citeauthoryear{Graham}{2005}]{Graham2005}
\begin{barticle}
\bauthor{\bsnm{Graham}, \binits{C.}}:
\batitle{Functional central limit theorems for a large network in which
  customers join the shortest of several queues}.
\bjtitle{Probab. Theory Related Fields}
\bvolume{131}(\bissue{1}),
\bfpage{97}--\blpage{120}
(\byear{2005})
\doiurl{10.1007/s00440-004-0372-9}
\end{barticle}
\endbibitem

%%% 29
\bibitem[\protect\citeauthoryear{Atar}{2012}]{Atar2012}
\begin{barticle}
\bauthor{\bsnm{Atar}, \binits{R.}}:
\batitle{A diffusion regime with nondegenerate slowdown}.
\bjtitle{Oper. Res.}
\bvolume{60}(\bissue{2}),
\bfpage{490}--\blpage{500}
(\byear{2012})
\doiurl{10.1287/opre.1110.1030}
\end{barticle}
\endbibitem

%%% 30
\bibitem[\protect\citeauthoryear{Rockafellar}{1970}]{Rockafellar1970}
\begin{bbook}
\bauthor{\bsnm{Rockafellar}, \binits{R.T.}}:
\bbtitle{Convex Analysis}.
\bsertitle{Princeton Mathematical Series, No. 28},
p. \bfpage{451}.
\bpublisher{Princeton University Press},
\blocation{Princeton NJ}
(\byear{1970})
\end{bbook}
\endbibitem

%%% 31
\bibitem[\protect\citeauthoryear{Stoyan}{1983}]{Stoyan1983}
\begin{bbook}
\bauthor{\bsnm{Stoyan}, \binits{D.}}:
\bbtitle{Comparison Methods for Queues and Other Stochastic Models}.
\bsertitle{Wiley Ser. Probab. Math. Stat.},
p. \bfpage{217}.
\bpublisher{John Wiley \& Sons},
\blocation{Chichester}
(\byear{1983})
\end{bbook}
\endbibitem

%%% 32
\bibitem[\protect\citeauthoryear{Shaked and
  Shantikumar}{2007}]{Shaked-Shantikumar2007}
\begin{bbook}
\bauthor{\bsnm{Shaked}, \binits{M.}},
\bauthor{\bsnm{Shantikumar}, \binits{J.G.}}:
\bbtitle{Stochastic Orders}.
\bsertitle{Springer Series in Statistics}.
\bpublisher{Springer},
\blocation{New York}
(\byear{2007}).
\doiurl{10.1007/978-0-387-34675-5}
\end{bbook}
\endbibitem

%%% 33
\bibitem[\protect\citeauthoryear{Thorisson}{2000}]{Thorisson2000}
\begin{bbook}
\bauthor{\bsnm{Thorisson}, \binits{H.}}:
\bbtitle{Coupling, Stationarity, and Regeneration}.
\bsertitle{Probability and its Applications},
p. \bfpage{517}.
\bpublisher{Springer},
\blocation{New York}
(\byear{2000}).
\doiurl{10.1007/978-1-4612-1236-2} .
\burl{https://doi.org/10.1007/978-1-4612-1236-2}
\end{bbook}
\endbibitem

%%% 34
\bibitem[\protect\citeauthoryear{Devroye}{1986}]{Devroye-1986}
\begin{bbook}
\bauthor{\bsnm{Devroye}, \binits{L.}}:
\bbtitle{Non-uniform Random Variate Generation}.
\bpublisher{Springer},
\blocation{New York}
(\byear{1986})
\end{bbook}
\endbibitem

%%% 35
\bibitem[\protect\citeauthoryear{Lindvall}{1992}]{Lindvall1992}
\begin{bbook}
\bauthor{\bsnm{Lindvall}, \binits{E.T.}}:
\bbtitle{Lectures on the Coupling Method}.
\bsertitle{Wiley Ser. Probab. Math. Stat.}
\bpublisher{John Wiley \& Sons},
\blocation{New York}
(\byear{1992})
\end{bbook}
\endbibitem

%%% 36
\bibitem[\protect\citeauthoryear{Blanchet et~al.}{2019}]{Blanchet2019}
\begin{barticle}
\bauthor{\bsnm{Blanchet}, \binits{J.}},
\bauthor{\bsnm{Pei}, \binits{Y.}},
\bauthor{\bsnm{Sigman}, \binits{K.}}:
\batitle{Exact sampling for some multi-dimensional queueing models with renewal
  input}.
\bjtitle{Adv. Appl. Probab.}
\bvolume{51}(\bissue{4}),
\bfpage{1179}--\blpage{1208}
(\byear{2019})
\doiurl{10.1017/apr.2019.45}
\end{barticle}
\endbibitem

\end{thebibliography}
\end{document}